\documentclass[manuscript]{acmart}

\acmJournal{TOIS}
\acmDOI{}
\setcopyright{none}
\renewcommand\footnotetextcopyrightpermission[1]{}

\AtBeginDocument{%
  }

\usepackage{amsmath,amsfonts}
\usepackage{algorithmic}
\usepackage{algorithm}
\usepackage{array}
\usepackage{textcomp}
\usepackage{verbatim}
\usepackage{graphicx}
\usepackage{wrapfig}
\usepackage{amsthm}
\usepackage{dsfont}
\usepackage{xcolor}
\usepackage{booktabs}
\usepackage{makecell}
\usepackage{tabularx}
\usepackage{multirow}
\usepackage{multicol}
\usepackage{scalerel}
\usepackage{tikz}
\usepackage[inline]{enumitem}
\usepackage{pgfplots}
\usepackage{pgfplotstable}
\usepackage{pifont}
\usepackage{cleveref}

\providecommand{\mathbbm}[1]{\mathds{#1}}

\usetikzlibrary{plotmarks}
\usetikzlibrary{patterns}
\usetikzlibrary{pgfplots.statistics}
\pgfplotsset{compat=1.18}

\newtheorem{theorem}{Theorem}
\newtheorem{problem}{Problem}
\theoremstyle{remark}
\newtheorem{definition}{Definition}[section]

\newtheorem{lemma}{Lemma}[section]

\newtheoremstyle{problemstyle}
        {3pt}
        {3pt}
        {\normalfont}
        {1em}
        {\bfseries\itshape}
        {\normalfont\bfseries:}
        {.5em}
        {}
\theoremstyle{problemstyle}

\newtheorem{innercustomgeneric}{\customgenericname}
\providecommand{\customgenericname}{}
\newcommand{\newcustomtheorem}[2]{%
  \newenvironment{#1}[1]
  {%
   \renewcommand\customgenericname{#2}%
   \renewcommand\theinnercustomgeneric{##1}%
   \innercustomgeneric
  }
  {\endinnercustomgeneric}
}
\newcustomtheorem{customthm}{Theorem}
\newcustomtheorem{customlemma}{Lemma}

\newcommand{\circledchar}[2][gray!70]{%
    \tikz[baseline=(char.base)]{
        \node[shape=circle,draw=#1,fill=#1,text=white,inner sep=0.75pt,scale=0.85] (char) {#2};
    }%
}
\newcommand{\Method}{\texttt{PreGress}}
\newcommand{\cmark}{\textcolor{red}{\ding{51}}}
\newcommand{\xmark}{\ding{55}}

\begin{document}

\title{PreGress: Ranking-Native Pre-training and Prompting for Graph Node Ranking}

\author{Lujie Ban}
\email{lujieban@link.cuhk.edu.cn}
\affiliation{%
  \institution{School of Data Science, The Chinese University of Hong Kong, Shenzhen}
  \city{Shenzhen}
  \country{China}}

\author{Jiasheng Shi}
\email{shijiasheng@cuhk.edu.cn}
\affiliation{%
  \institution{School of Data Science, The Chinese University of Hong Kong, Shenzhen}
  \city{Shenzhen}
  \country{China}}

\author{Yingli Zhou}
\email{yinglizhou@link.cuhk.edu.cn}
\affiliation{%
  \institution{School of Data Science, The Chinese University of Hong Kong, Shenzhen}
  \city{Shenzhen}
  \country{China}}

\author{Kaiwen Xue}
\email{xuekaiwen6@huawei.com}
\affiliation{%
  \institution{GTS, Huawei Technologies Co., Ltd.}
  \country{China}}

\author{Daiyin Wang}
\email{wangdaiyin@huawei.com}
\affiliation{%
  \institution{GTS, Huawei Technologies Co., Ltd.}
  \country{China}}

\author{Xubin Li}
\email{lixubin@huawei.com}
\affiliation{%
  \institution{GTS, Huawei Technologies Co., Ltd.}
  \country{China}}

\author{Shuanghua Li}
\email{leesantwa@huawei.com}
\affiliation{%
  \institution{GTS, Huawei Technologies Co., Ltd.}
  \country{China}}

\author{Chenhao Ma}
\authornote{Corresponding author.}
\email{machenhao@cuhk.edu.cn}
\affiliation{%
  \institution{School of Data Science, The Chinese University of Hong Kong, Shenzhen}
  \city{Shenzhen}
  \country{China}}

\renewcommand{\shortauthors}{Ban et al.}

\begin{abstract}
Node ranking is a fundamental problem in graph information retrieval, measuring the relative importance of nodes and supporting a wide range of applications such as influence analysis, recommendation, and graph-based retrieval augmented generation.
However, exact computation of graph-based ranking measures is often computationally prohibitive at scale.
Existing GNN-based ranking methods provide scalable approximations, but they are typically tailored to individual ranking criteria and require retraining for each downstream task, which limits their transferability and efficiency.
Recent graph pre-training approaches aim to enable knowledge transfer across tasks, yet their learning objectives are largely misaligned with node ranking, resulting in suboptimal adaptability to ranking-oriented applications.
To address these limitations, we propose PreGress, the first ranking-native pre-training and prompting framework for supporting a wide range of node ranking tasks.
PreGress performs multi-task pre-training using our carefully designed objectives, including degree centrality prediction and attribute reconstruction, to jointly capture structural and attribute information.
To support heterogeneous ranking criteria, we design lightweight, task-specific prompt modules that adapt a frozen ranking backbone to downstream tasks without full retraining.
Experiments on six public graphs and two real-world query-to-item benchmarks---Yelp2018 and MovieLens-100K---together with a controlled five-criterion graph-access study demonstrate strong ranking quality with low task-specific state overhead.
\end{abstract}
\begin{CCSXML}
<ccs2012>
<concept>
<concept_id>10002951.10003317</concept_id>
<concept_desc>Information systems~Information retrieval</concept_desc>
<concept_significance>500</concept_significance>
</concept>
   <concept>
       <concept_id>10002951.10003227.10003351</concept_id>
       <concept_desc>Information systems~Data mining</concept_desc>
       <concept_significance>500</concept_significance>
       </concept>
   <concept>
       <concept_id>10010147.10010257</concept_id>
       <concept_desc>Computing methodologies~Machine learning</concept_desc>
       <concept_significance>500</concept_significance>
       </concept>
 </ccs2012>
\end{CCSXML}
\ccsdesc[500]{Information systems~Information retrieval}
\ccsdesc[500]{Information systems~Data mining}
\ccsdesc[500]{Computing methodologies~Machine learning}
\keywords{node ranking, prompt tuning, graph information retrieval}

\maketitle

\section{Introduction}
\label{sec:intro}
In modern information retrieval systems, such as web search, document retrieval, and recommendation systems, ranking items according to their relevance or importance is a clearly fundamental task~\cite{page1999pagerank,kobayashi2000information,cui2020personalized,miller2004pocketlens}.
In practice, the items to be ranked can originate from diverse data sources, including graphs, documents, HTML pages, tables, and other structured or semi-structured data \cite{sen2020curious,pang2020setrank,zhuang2023rankt5,karmaker2017application}.
Among them, graph data is widely used to model social networks, biological networks, and other complex systems \cite{zhao2018ranking,he2016birank,tran2013counting,huang2014querying,10.1145/2723372.2737791,fang2020survey}. 
In this work, we focus on accurately ranking nodes in graphs.
\Cref{fig:problem_example} shows a toy example of node ranking, where given an input graph and a query (e.g., ``top-3 most-cited papers''), a ranker produces an ordered list of nodes according to a task-specific importance measure.

\begin{wrapfigure}{r}{0.52\textwidth}
    \centering
    \includegraphics[width=\linewidth]{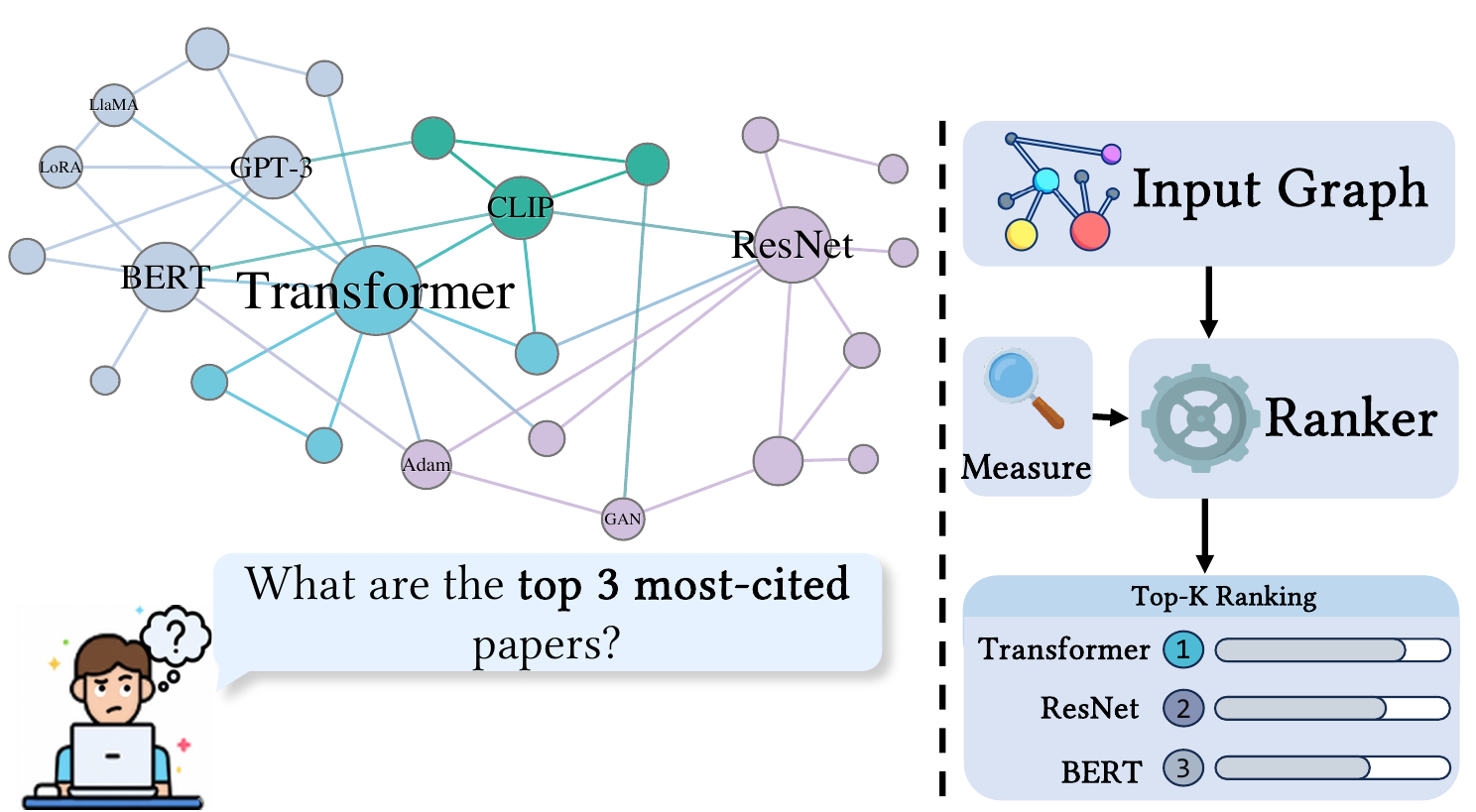}
    \caption{An illustration of the Node Ranking problem in social networks.}
    \label{fig:problem_example}
\end{wrapfigure}

In graphs, ranking nodes typically involves assigning a score to each node and ordering nodes according to these scores. In the literature, many node scoring criteria have been proposed, most of which rely on graph topology to quantify node importance.
For example, centrality-based measures, such as degree, betweenness,  local subgraph count, and eigenvector centrality, are widely used to characterize the roles of nodes in a network \cite{lee2012qube} and assess the structural quality of communities \cite{koschutzki2008centrality,lagos2024finding}.
In addition, PageRank and its variants have been extensively applied in web search \cite{page1999pagerank,haveliwala2002topic}, community search \cite{li2019optimizing,davis2006estimating}, and graph-based RAG \cite{huang2025ket}.
That is, different scoring criteria are suitable for different scenarios.

In downstream network analysis applications, efficiently computing node ranking scores is often required. To this end, numerous methods have been proposed over the past decades, which can be broadly grouped into two categories.
{\it 1) traditional methods:} This line of work typically proposes exact or approximate algorithms for computing node ranking scores~\cite{page1999pagerank,brandes2001faster,ruhnau2000eigenvector}. Both of them have inherent limitations:  the former is often computationally expensive, while the latter leads to inaccurate estimation.
{\it 2) Learning-based methods:} These methods \cite{ergashev2023learning,narang2021ranking,jafarzadeh2022learning} leverage graph neural networks (GNNs) \cite{kipf2016semi,velivckovic2017graph,hamilton2017inductive,xu2018powerful} to approximate node ranking scores with lower inference cost and higher accuracy. 
However, they suffer from a critical limitation: for each node scoring criterion (e.g., PageRank), they must first obtain node labels via human annotation, which are then used to construct the training dataset for model training.
This process is often expensive or even infeasible in practice, as it requires recollecting training data and retraining a new model for every ranking criterion, which severely limits the applicability and scalability of learning-based approaches. 
The above limitations raise a question:
\circledchar[black]{Q} {\em Can we derive unified and transferable meta-knowledge from easily accessible graph signals, enabling a single model to adapt to diverse node ranking criteria without relying on costly human annotation?}

\begin{wraptable}{R}{0.55\textwidth}
\small
\centering
\caption{Node Ranking Paradigms Comparison.}
\label{tab:node_ranking_paradigms_binary}

\resizebox{\linewidth}{!}{%
\begin{tabular}{l l c c}
\toprule
\textbf{Paradigm} & \makecell[l]{\textbf{Representative} \\ \textbf{Methods}} &
\textbf{Ranking-native} &
\textbf{Low-cost} \\
\midrule
\makecell[l]{\textbf{Traditional} \\ \textbf{Node Ranking}} &
\makecell[l]{PageRank\cite{page1999pagerank} \\Betweenness\cite{freeman1977set}} &
\cmark & \xmark \\
\addlinespace

\textbf{Vanilla GNN} &\makecell[l]{
GCN\cite{kipf2016semi} \\ GraphSage\cite{hamilton2017inductive}
}&
\cmark & \xmark \\
\addlinespace

\textbf{Finetuning} &
\makecell[l]{SimGRACE\cite{xia2022simgrace}\\ GraphMAE\cite{hou2022graphmae}} &
\xmark & \cmark \\
\addlinespace

\textbf{Prompt Tuning} &
\makecell[l]{GraphPrompt\cite{liu2023graphprompt}\\ All-in-one\cite{sun2023all}} &
\xmark & \cmark \\
\midrule

\makecell[c]{\textbf{Ranking-native}\\ \textbf{Prompt Tuning}} &
\makecell[l]{\textbf{\Method}} &
\cmark & \cmark \\
\bottomrule
\end{tabular}
}
\end{wraptable}

To address the above challenges, inspired by the success of pre-training in NLP field ~\cite{devlin2018bert,brown2020language}, several pioneering works ~\cite{hu2019strategies,hu2020gpt,xia2022simgrace,sun2022gppt,liu2023graphprompt,li2024adaptergnn} focus on designing effective pre-trained GNNs following the {\it ``pre-training and finetuning''} paradigm or {\it ``pre-training and prompt tuning''} paradigm to address the above challenges, 
Compared with ``pre-training and finetuning'' paradigm, prompt tuning based method enables the pre-trained model remain frozen, and only the input data is modified via prompts to better match the knowledge already encoded in the pre-trained model. 
This makes prompt tuning more efficient, especially for large pre-trained models, as fewer parameters are changed.
Therefore, leveraging a pre-trained graph model with prompt tuning modules for different downstream tasks could be a promising approach to address our earlier question \circledchar[black]{Q}.

{\bf Challenges.} 
Pioneering graph prompt tuning works~\cite{sun2022gppt,liu2023graphprompt} primarily target discrete classification tasks. This focus is incompatible with the continuous numerical space of our ranking objective, preventing the effective transfer of pre-trained knowledge \cite{liu2023pre} and resulting in sub-optimal performance.
To this end, to design a ranking-native framework based on the pre-training and prompt tuning paradigm, we need to address the following challenges:
\circledchar[gray]{1} {\em What task(s) should we choose as the pre-training task?} 
The pre-training task should be consistent with the downstream tasks; that is, the output space of the pre-training task should be numerical. 
Moreover, the ground-truth labels should be easy to obtain to support large-scale pre-training.
\circledchar[gray]{2} {\em How should we design the pre-training model?} 
The pre-training GNN model should capture the key information needed by diverse downstream applications while also avoiding the over-smoothing issue.
\circledchar[gray]{3} {\em How should we design the prompt tuning module for different downstream tasks?}
Different tasks may take different inputs. 
More specifically, local subgraph counting takes both a data graph and a pattern graph as input and counts how many times each vertex participates in the given pattern, whereas centrality-based prediction only requires the data graph as input.
The design of the prompt tuning module should account for both the commonalities and differences among these tasks.

\textbf{Our Solution}. Addressing these challenges, we propose {\Method}\footnote{\url{https://github.com/banrichard/PreGress}}, a ranking-native pre-training and prompting framework for learning transferable relevance scoring functions over graphs.

\underline{Pre-training Task Selection} (Challenge \circledchar[gray]{1}): We adopt {\em degree centrality prediction} as a pre-training task because it provides continuous output values and uses easily obtainable ground-truth labels—the degrees of nodes. 
This aligns with our continuous-output downstream rnaking tasks and supports large-scale pre-training. 
Additionally, we employ {\em node attribute reconstruction} as an auxiliary task to capture intrinsic relationships among node properties.

\underline{Pre-training Model Design} (Challenge \circledchar[gray]{2}): We divide the data graph into multiple {\em ego networks} and use a {\em subgraph neural network} as the pre-training model. 
This focuses node representations on local information essential for predicting node-centric graph properties and mitigates over-smoothing issues common in vanilla GNNs with many layers, since message passing occurs within confined subgraphs.

\underline{Prompt Tuning Module Design} (Challenge \circledchar[gray]{3}): We design {\em task-specific prompt strategies} to align downstream tasks with the pre-trained model's knowledge. 
For betweenness centrality tasks, we use a simple learnable vector as the prompt function to adjust graph representations. 
For local subgraph counting tasks, we incorporate a graph neural network to learn prompts based on pattern graphs. 
By tuning only the prompt functions and a unified projection head --- while keeping the pre-trained model's parameters frozen --- we enhance efficiency and reduce computational cost.

With these designs, {\Method} can be pre-trained once and then flexibly adapted to a wide range of downstream node-ranking tasks via prompt tuning.
As shown in Table~\ref{tab:node_ranking_paradigms_binary}, a notable feature of {\Method} is that it is inherently ranking-native and supports low-cost transfer to diverse downstream tasks through lightweight prompt tuning, rather than retraining the model from scratch.

We summarize our main contributions as follows:
\begin{itemize}
    \item We introduce the first \emph{ranking-native} pre-training and prompting framework for node ranking over graphs.
    
    \item We propose a \emph{multi-task pre-training strategy} to align graph pre-training objectives with the requirements of fine-grained ranking and retrieval.
    
    \item We design \emph{task-specific prompt modules} that adapt a shared, frozen ranking backbone to heterogeneous node ranking scenarios.
    
    \item Through extensive experiments on six public datasets, we demonstrate that {\Method} achieves \emph{strong and consistent ranking performance} while substantially reducing computational cost.
\end{itemize}

\section{Preliminary}
\label{sec:preliminary}
\subsection{Problem Definition}
Before formally defining the node ranking problem on graphs, we first review some basic graph concepts. 
    
A graph \(\mathcal{G} \) is denoted by a triplet \(\mathcal{G} = (V,E,\mathbf{X})\). \(V\) denotes the node set, \(E\) denotes the edge set, and \(\mathbf{X} \in \mathbb{R}^{N \times d}\) denotes the node attribute matrix that contains node properties, where $N \in \mathbb{N}_+$ represents the number of nodes in $V$ and $d\in \mathbb{N}_+$ denotes the feature dimensions of nodes, respectively. 

In this paper, the definition of node ranking task is:
\begin{problem}[Node Ranking]
\label{prb:grr}
Given a graph \(\mathcal{G}\) and a measure $m$ that specifies a ranking criterion over nodes, the goal is to learn a relevance scoring function $s_{\theta}:(v\mid m,\mathcal{G})\mapsto \mathbb{R}$,
and output a ranked list $\pi_m$ over $V$ by sorting nodes in descending order of $s_{\theta}(v\mid m,\mathcal{G})$.
\end{problem}
Here, measure $m$ is a ranking criterion which assigns each node $v$ an importance score under graph $\mathcal{G}$. For a given measure $m$, we denote by $y_m(v)\in\mathbb{R}$ the real-valued score assigned to node $v$ under $m$ on $\mathcal{G}$, which naturally induces a preference relation on $V$: for any $v_i,v_j\in V$, we write $v_i \succ_m v_j$ if and only if $y_m(v_i) > y_m(v_j)$.

We select \textbf{betweenness centrality}, and \textbf{local subgraph counting} as our downstream measures.
The definition of betweenness centrality is:
\begin{definition}[Betweenness centrality \cite{brandes2001faster}]
\label{def:between}
    Given a graph \( \mathcal{G} = (V, E, \mathbf{X}) \), the betweenness centrality \( B(i) \) of node \( i \) is defined as:
\begin{equation}
    B(i) = \sum_{s\neq i \neq t \in V} \frac{\sigma_{s,t}(i)}{\sigma_{s,t}},
\end{equation}
where \( \sigma_{s,t} \) denotes the number of shortest paths from $s$ to $t$, and \( \sigma_{s,t}(i) \) denotes the number of shortest paths that $i \in V$ lies on.
\end{definition}

The definition of local subgraph counting is:
\begin{definition}[Local subgraph counting \cite{li2024fast}]
\label{def:lsc}
    Given a graph $\mathcal{G} = (V,E,\mathbf{X})$ and a pattern graph $\mathcal{M} = (V_\mathcal{M}, E_\mathcal{M}, X_\mathcal{M})$, for node $v \in V$, local subgraph counting is to find the count of the subgraphs $\mathcal{G}[V_{\mathcal{S}}]$ of $\mathcal{G}$ that are isomorphic to $\mathcal{M}$ and contains node $v$, i.e., $\mathcal{M}\sim \mathcal{G}[V_{\mathcal{S}}]$ and $v\in V_{\mathcal{S}}$.
\end{definition}
\subsection{Graph Neural Network}
Graph Neural Network (GNN)~\cite{kipf2016semi} is a special form of neural network that is tailored to graph-structured data. 
%
%
Formally, the representation of $\mathbf{x}^{(k)}_i$ at GNN's $k$-th layer can be expressed as:
\begin{gather}
    \mathbf{x}^{(k)}_{i} = \zeta^{(k)}(\mathbf{x}^{(k-1)}_i,\mathbf{m}_i^{(k)}),\label{eq:gnn}\\
    \mathbf{m}_i^{(k)} = Agg_{j \in \mathcal{N}(i)}(\eta^{(k)}(\mathbf{x}^{(k-1)}_i,\mathbf{x}^{(k-1)}_j)),
    \label{eq:message}
\end{gather}
Here, \(\zeta(\cdot)\) denotes the update function and $\eta^{(k)}(\cdot)$ denotes the message function. 
%
%
The message $\mathbf{m}_i^{(k)}$ at $k$-th layer is calculated by Eq. \ref{eq:message}, where  $Agg(\cdot)$ denotes the permutation invariant aggregation function.
%
%
%
%

\section{Our Proposed Approach: PreGress}
\label{sec:method}
\begin{figure*}
    \centering
    \includegraphics[width=\textwidth]{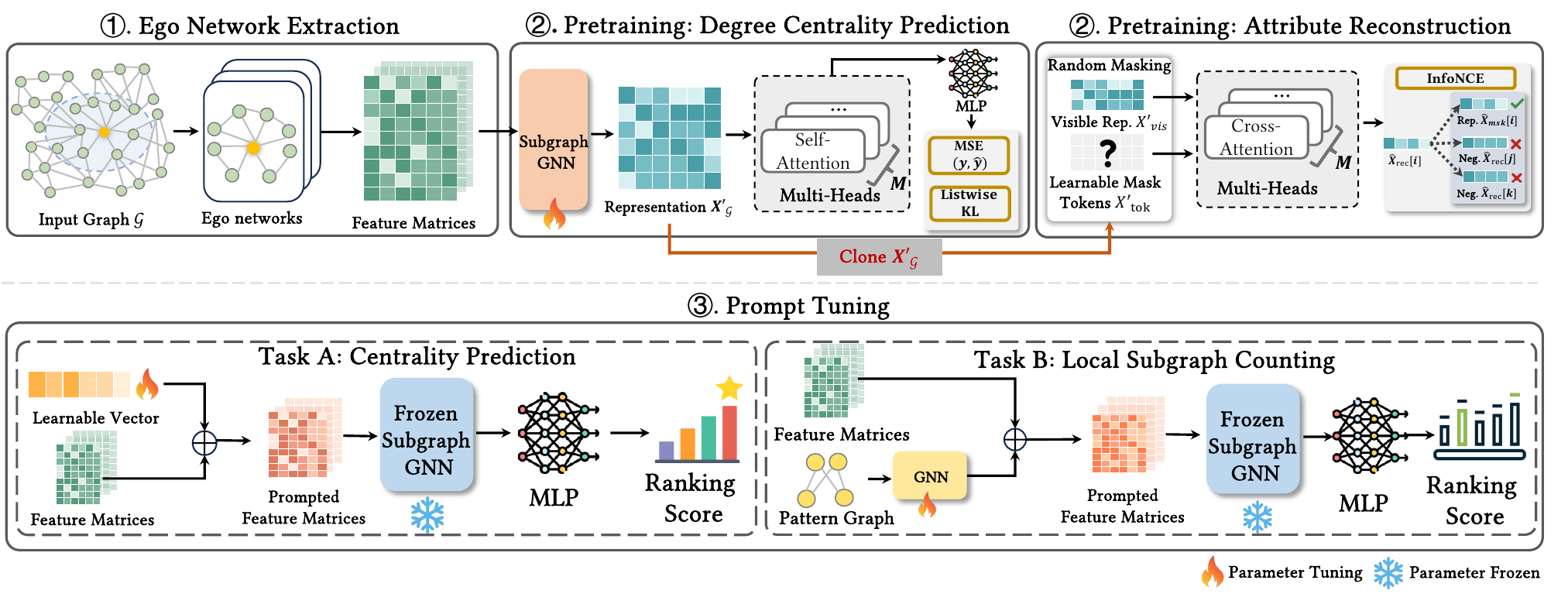}
    \caption{Overall pipeline of the proposed PreGress.}
    \label{fig:pretrain}
\end{figure*}
In this section, we present a ranking-native pre-training and prompt tuning approach, {\Method}, to rank the nodes on the given graph $\mathcal{G}$ with the specific measure $m$ efficiently; the overall framework is shown in \Cref{fig:pretrain}. {\Method} contains three stages: \circledchar[black]{1} \emph{ego network extraction} , \circledchar[black]{2} \emph{multi-task pre-training} , and \circledchar[black]{3} \emph{task-specific prompt tuning} .

The \circledchar[black]{1} \emph{ego network extraction} stage involves ego network extraction for given graph $\mathcal{G}$ and enhance graph representation with subgraph neural network for accurate rank estimation (\Cref{sec:preprocess});  \circledchar[black]{2} \emph{multi-task pre-training} stage introduces two tasks as our pre-training task to align graph pre-training objectives with the requirements of fine-grained
ranking tasks (\Cref{sec:multi_task_pretraining});  \circledchar[black]{3} \emph{task-specific prompt tuning} stage manipulates the input downstream graph with lightweight prompt according to the specific measure $m$, adapting the frozen backbone to various ranking scenarios (\Cref{sec:prompt_tuning}).

\subsection{Ego Network Extraction}
\label{sec:preprocess}
Considering Problem \ref{prb:grr}, the local structure of a node plays a crucial role in various graph-related tasks.
%
%
In vanilla GNNs, 
the size of the receptive field is determined by the number of GNN layers  $l$. 
However, the use of large $l$ in these GNN models often leads to the over-smoothing problem \cite{rusch2023survey}. 
This occurs when increasing the number of layers and causes the node embeddings to become indistinguishable. 
To mitigate this issue, we propose to use subgraph neural networks \cite{alsentzer2020subgraph,zhang2021nested,zeng2023substructure} in the pre-training phase.

%
Unlike vanilla GNNs, subgraph neural networks process multiple subgraphs induced from the original graph. 
These subgraphs are generated based on the local neighborhood structure around each node.
In our model, the node representation for each node in the subgraph is first obtained through the message-passing mechanism. 
Through the extra pooling layer, the node representations are aggregated and will be used as the central node's representation. 
Therefore, the receptive field of each node is constrained to its $l$-hop neighbors {\bf within the subgraph}, rather than its $l$-hop neighbors in the original graph.
This ensures that central node representations remain distinguishable, thus addressing the over-smoothing issue while preserving local structural and attribute information. 
For instance, the subgraph $\mathcal{G}_{sub}[i]$ is an ego network centered at node $v_i$, and the representation of center node $v_i$ is calculated by:
\begin{equation}
\label{eq:readout}
     \mathbf{x}^i_{\mathcal{G}} = readout(\{\mathbf{x}_j|v_j \in V_{sub}[v_i]\}) 
\end{equation}
Here, \(\mathbf{x}_j\) denotes the representation of node $v_j$ within the subgraph, \(readout(\cdot)\) denotes a permutation invariant pooling (readout) function 
and \(V_{sub}[v_i]\) denotes the node set in induced subgraph centered at node $v_i$. 
The final representation \(\mathbf{X}_\mathcal{G}\) is obtained by concatenating all the subgraph representation \(\mathbf{x}^i_{\mathcal{G}} \), which can be expressed as:
\begin{equation}
    \label{eq:concatgraph}
    \mathbf{X}_{\mathcal{G}} = [\mathbf{x}^0_{\mathcal{G}}||\mathbf{x}^1_{\mathcal{G}}||\dots||\mathbf{x}^{|V|}_{\mathcal{G}}]
\end{equation}

In this paper, we leverage a straightforward but effective method $k$-hop ego-network extraction to transform the original data graph into multiple subgraphs. The formal definition of $k$-hop ego-network is shown in~ \Cref{def:khop}.
\begin{definition}[$k$-hop ego network \cite{freeman1982centered}]
\label{def:khop}
Given a graph $\mathcal{G}$, the $k$-hop ego network $\mathcal{G}^k_{ego}[v]$ = $(V^k_{ego}[v],E^k_{ego}[v],\mathbf{X}^k_{ego}[v])$ centered at node $v$ is a subgraph of $\mathcal{G}$ whose node set $V^k_{ego}[v]$ includes $v$ and all nodes reachable from $v$ within $k$ hops in $\mathcal{G}$, and edge set $E^k_{ego}[v]$ includes all edges between nodes in $V^k_{ego}[v]$.
\end{definition}
    
Based on Definition \ref{def:khop}, we can easily extract $|V|$ $k$-hop ego networks and pass them to the subgraph neural network for pre-training and different downstream tasks. 
%
Here, the hop number $k$ is an important parameter that controls the structural information captured within the subgraph. Through the extracted subgraphs, we focus on local structure and align with down-stream node-level tasks.
%

\subsection{Multi-task Pre-training}
\label{sec:multi_task_pretraining}
The target of the pre-training stage is to ensure the pre-trained model can effectively capture both the structural and attribute-based information of the graph and support diverse downstream applications.
In order to achieve this, we adopt a \textbf{multi-task pre-training strategy} consisting of two key tasks: \textbf{Degree centrality prediction task}, which focuses on learning structural information, and \textbf{Attribute reconstruction task}, which ensures that the model
captures the attribute information associated with each node.
%
%
The ground-truth values for both tasks---degrees for the degree centrality prediction and node features for the attribute reconstruction task---are readily available in the input graph. This makes the training process computationally efficient since no additional external data or costly labels are required.

Before applying the pre-trained subgraph neural network to downstream tasks, we first train the subgraph neural network through a degree centrality prediction task. 
Compared with other pre-training methods that leverages supervised learning, the importance of each node is easy to obtain with heuristic methods \cite{page1999pagerank,freeman1977set}. 
The main goal of pre-training is to leverage readily available information to capture the inherent properties of graphs. 
The pre-trained models can then be used as initial models for a wide range of downstream tasks.

The degree centrality of each node is easy to obtain with traditional methods leveraging the topological information \cite{golbeck2015introduction}. 
However, predicting the importance only through topological information does not leverages the node properties, which may lead to an inaccurate prediction.
Inspired by the success of BERT in masked language modeling \cite{devlin2018bert} and Masked Autoencoder in masked image modeling \cite{he2022masked}, we select the attribute reconstruction task as an auxiliary task to augment the centrality prediction task.
In this way, the pre-trained model can exploit deeper relationship between node importance and the intrinsic properties. 
In the pre-training phase, the ego networks are fed to the subgraph neural network to learn the input graph representation \(\mathbf{X}' \in \mathbb{R}^{N \times d }\).
After we get the representation, the representation will be cloned and processed by two prediction modules separately: one self-attention prediction module to predict the degree centrality, and another with cross-attention to reconstruct the node attributes.

The main target for degree centrality prediction is to identify each node's ``importance'' regarding its local topological information. 
Therefore, an effective strategy is required to capture the relationships between nodes through their representations. 
Self-attention mechanisms are well-suited for this purpose, as they enable models to focus on relevant aspects of the input while disregarding irrelevant parts, akin to the human cognitive attention process.
One intuitive assumption is that a node's neighbors should have higher importance weights compared to nodes that are not directly connected. 
The node representations encapsulate the topological information of both the input graph and each ego network. 
Using this information, the mult-head self-attention mechanism can discern the relative importance of other nodes in relation to the target node.
From another perspective, this also optimizes the subgraph neural network to provide high-quality node representation through backpropagation.

In the self-attention mechanism, the representation matrix is transformed into three matrices: Key $\mathbf{K} = \mathbf{X}'\mathbf{W}_K$, Query $\mathbf{Q} =\mathbf{X}'\mathbf{W}_Q$, and Value $\mathbf{V} = \mathbf{X}'\mathbf{W}_V$, where $\mathbf{W}_Q \in \mathbb{R}^{d \times d_q}$,$\mathbf{W}_K \in \mathbb{R}^{d \times d_k}$ and $\mathbf{W}_V \in \mathbb{R}^{d \times d_k}$ are learnable matrices for the query, key, and value projections, respectively.
%
%
The whole self-attention process is:
\begin{equation}
    \mathbf{\hat{Y}} = \text{softmax}\left(\frac{\mathbf{Q}\mathbf{K}^T}{\sqrt{d_k}}\right)\mathbf{V}.
\end{equation}
%
%
%
The centrality can be aggregated from the node representations, with the following MLP.

For node attribute reconstruction, node representations are randomly masked with a ratio \(\sigma\), and the representations are divided into \(\mathbf{X}_{vis}' \in \mathbb{R}^{N_{vis}\times d}\) and \(\mathbf{X}_{msk}' \in \mathbb{R}^{N_{msk}\times d}\), where \(N_{vis} = N \times (1-\sigma)\) and \(N_{msk} = N \times \sigma\).
The goal is to infer the masked node attributes \(\mathbf{X}_{msk}'\) from learnable mask tokens \(\mathbf{X}_{tok}'  \in \mathbb{R}^{N_{msk}\times d}\) via the visible attributes \(\mathbf{X}_{vis}'\) and the underlying graph structure.
Each mask token represents a masked node representation, and we apply positional encoding ~\cite{ying2021transformers} for them to indicate their topological information.
In this scenario, multi-head cross-attention is more suitable than multi-head self-attention, as it better captures the relationships between the visible representation \(\mathbf{X}_{vis}'\) and learnable masked token \(\mathbf{X}_{tok}'\) via the encoded topological information.
In cross-attention, the visible representations \(\mathbf{X}_{vis}'\) serve as the key and value, and masked tokens are used as the query.
%
%
The cross-attention process can be expressed as:
\begin{equation}
    \hat{\mathbf{X}}'_{msk} = \text{softmax} \left(\frac{(\mathbf{X}_{tok}'\mathbf{W}_Q)(\mathbf{X}_{vis}'\mathbf{W}_k)^T}{\sqrt{d_k}}\right)(\mathbf{X}_{vis}'\mathbf{W}_V).
\end{equation}

In cross-attention, the attention coefficient reflects the similarity between masked attributes and visible attributes. 

With optimization, the masked token will learn the attribute similarity from the visible representation, and predict the masked representation as \(\hat{\mathbf{X}}_{msk}'\).

The visible attributes are encoded through the subgraph neural network, which contains local topological information and attribute information.

If the subgraph neural network can output a high-quality representation, the masked attribute is more likely to be reconstructed with high accuracy via the cross-attention mechanism.

Our pre-training loss combines (i) degree centrality prediction and (ii) masked attribute reconstruction.
For (i), we treat degree centrality as an intrinsic pseudo graded relevance signal and primarily perform pointwise score regression to calibrate the predicted importance values. Since node ranking ultimately depends on relative ordering, we further add a lightweight listwise distribution-matching regularizer within each mini-batch candidate set to encourage order consistency:
\begin{equation}
\label{eq:imp_combo}
\mathcal{L}_{imp}=(1-\lambda)\mathcal{L}_{mse}+\lambda\mathcal{L}_{list},
\quad
\mathcal{L}_{mse}=\frac{1}{n}\sum_{i=1}^{n}(y_i-\hat{y}_i)^2,
\end{equation}
\begin{equation}
\label{eq:list_loss}
\begin{aligned}
\mathcal{L}_{list} &= \mathrm{KL}(\mathbf{q}\|\mathbf{p}),\\
p_i &= \frac{\exp(\hat{y}_i/\tau)}{\sum_{j=1}^{N}\exp(\hat{y}_j/\tau)},\\
q_i &= \frac{\exp(y_i/\tau_y)}{\sum_{j=1}^{N}\exp(y_j/\tau_y)}.
\end{aligned}
\end{equation}
Here, $\mathcal{L}_{mse}$ stabilizes score calibration, while $\mathcal{L}_{list}$ explicitly enforces order consistency within each candidate list, making the objective ranking-oriented.

For (ii), we optimize an InfoNCE objective that matches each prediction $\hat{\mathbf{x}}'_i=\hat{\mathbf{X}}'_{\mathrm{msk}}[i]$
to its ground-truth embedding $\mathbf{x}'_i=\mathbf{X}'_{\mathrm{msk}}[i]$ against negatives from the visible set:
\begin{equation}
\label{eq:rec_infonce}
\mathcal{L}_{rec} = -\sum_{i=1}^{N_{\mathrm{msk}}}
\log
\frac{\exp(\mathrm{sim}(\hat{\mathbf{x}}'_i,\mathbf{x}'_i)/\tau_r)}
{\sum_{j=1}^{N_{\mathrm{vis}}}\exp(\mathrm{sim}(\hat{\mathbf{x}}'_i,\mathbf{X}'_{\mathrm{vis}}[j])/\tau_r)},
\end{equation}
where $\mathrm{sim}(\cdot,\cdot)$ is cosine similarity and $\tau_r$ is a temperature.

The overall pre-training loss function combines both losses:
\begin{equation}
\label{eq:preobj}
    \mathcal{L}_{pre} = \alpha\mathcal{L}_{rec} + (1-\alpha)\mathcal{L}_{imp}
\end{equation}
Through the designed pre-training architecture, the model can incorporate both topological information and node properties to predict node importance and learn the intrinsic relationships between nodes. Essentially, the model learns the influence of each node on others.  

\subsection{Task-specific Prompt Tuning}
\label{sec:prompt_tuning}

For \textbf{centrality prediction tasks}, one drawback of traditional methods \cite{page1999pagerank,freeman1977set} is the efficiency, especially on large graphs. 
Addressing the efficiency problem, we design a learnable vector $\mathbf{v} \in \mathbb{R}^{1\times d}$ as the prompt function to transform the input graph into a prompted graph. 
%
%
%
The learnable vector will be added to the down-stream graph's node attributes $X_{down}$, and the prompted graph attributes $X^p_{down}$ can be expressed as:
\begin{equation}
\label{eq:imprompt}
    \mathbf{X}^p_{down} = \mathbf{X}_{down} + \mathbf{1}^T\mathbf{v},
\end{equation}
where \(\mathbf{1}^T \in \mathbb{R}^{|V|\times 1}\) denotes a column vector whose elements are 1.
The prompted attributes $\mathbf{X}^p_{down}$ will be passed to the pre-trained subgraph neural network and downstream projection head to make the prediction. The loss function for this task is MSE loss:
\begin{equation}
\label{eq:dwnimploss}
    \mathcal{L}_{imp} = MSE(\hat{y}_{down},y_{down}).
\end{equation}

For the \textbf{local subgraph counting task}, it is more difficult than centrality prediction tasks. When working with the pattern graph, we need to take into account both its property information and topological structure.
%
%
The learnable vector is not sufficient to capture both the property information and topological information in the pattern graph.
%
%
It is intuitively assumed that the induced subgraph is likely to match the pattern graph if their features are highly similar. 
Addressing the aforementioned limitations, we apply a lightweight GNN to capture both the property information and topological information in the pattern graph.
The learned pattern graph representation \(\mathbf{v}_p\) will be used as the prompt vector and add to the down-stream graph's node attributes $X_{down}$:
\begin{equation}
    \label{eq:downlsc}
    \mathbf{X}^p_{down} =  \mathbf{X}_{down} + \mathbf{1}^T\mathbf{v}_p,
\end{equation}
where \(\mathbf{v}_p = Agg({\mathbf{x}_{i'}|x_{i'} \in \mathbf{X}_p})\) is the prompt vector learned from the lightweight GNN model trained on pattern graphs.

The above solution only solves the information capture problem, and the relationship between the pattern graph and ego network is not explored. Here, we apply soft canonical correlation analysis (Soft CCA) \cite{chang2018scalable} to exploit the similarity between the two graphs' representation. Soft CCA is applied as a regularization term in down-stream tuning, which can be expressed as:
\begin{equation}
\begin{aligned}
    \mathcal{L}_{CCA} &= \sum_{i=0}^N\mathcal{L}_{dist}(\mathbf{X}^p_{down}(i),\mathbf{v}_p)\\
    &+ \lambda (\mathcal{L}_{dl}(\mathbf{X}^p_{down}) + \mathcal{L}_{dl}(\mathbf{v}_p)),
\end{aligned}
\end{equation}
where \(\mathcal{L}_{dist}(\cdot,\cdot)\) denotes the correlation between the representation of the pattern graph and extracted ego network, which is calculated by:
\begin{equation}
    \mathcal{L}_{dist}(X^p_{down}(i),\mathbf{v}_p) = \frac{1}{2}\Vert \mathbf{X}^p_{down}(i)- \mathbf{v}_p\Vert^2_F.
\end{equation}
\(\mathcal{L}_{dl}(\cdot)\) denotes the decorrelation loss, which can be expressed as:
\begin{equation}
    \mathcal{L}_{dl}(\mathbf{U}) = \Vert \mathbf{U}^T\mathbf{U} - \mathbf{I} \Vert^2_F,
\end{equation}
where \(\mathbf{U}\) represents the normalized representation matrix and \(\mathbf{I}\) is the identity matrix. 
By minimizing the CCA loss, the prompt vector is adjusted to capture the similarity between the pattern graph and the ego networks.

The overall loss function of the local subgraph counting task can be expressed as:
\begin{equation}
\label{eq:lscobj}
        \mathcal{L}_{lsc} = MSE(\hat{y}_{down},y_{down}) + \beta\mathcal{L}_{CCA}.
\end{equation}

\subsection{Optimization Procedure}
\Cref{alg:pregress_training} summarizes how the two stages share the SNN backbone. Pre-training first optimizes the backbone and the two prediction modules on topology-derived degree targets and masked attributes. Downstream adaptation then freezes the backbone and updates only the task-specific prompt and prediction head. This separation ensures that all downstream measures reuse the same pre-trained representation rather than obtaining task-specific copies of the backbone.
\begin{algorithm}[H]
\caption{Pre-training and task-specific prompt adaptation}
\label{alg:pregress_training}
\begin{algorithmic}[1]
\STATE \textbf{Input:} Pre-training graph \(\mathcal{G}\), downstream graph \(\mathcal{G}_{down}\), optional pattern graph \(\mathcal{G}_p\), SNN \(f_s\), loss weights \(\alpha,\beta\)
\STATE Extract ego networks \(\mathcal{S}_{ego}\) from \(\mathcal{G}\)
\FOR{each pre-training mini-batch \(\mathcal{S}_b\subset\mathcal{S}_{ego}\)}
    \STATE Encode all rooted subgraphs with \(f_s\) and sum pooling
    \STATE Predict degree centrality with self-attention
    \STATE Reconstruct masked attributes with cross-attention
    \STATE Update \(f_s\) and the prediction modules using \(\mathcal{L}_{pre}\) in \Cref{eq:preobj}
\ENDFOR
\STATE Freeze the parameters of \(f_s\)
\STATE Extract downstream ego networks from \(\mathcal{G}_{down}\)
\IF{the downstream task is centrality prediction}
    \STATE Add a learnable vector prompt using \Cref{eq:imprompt}
    \STATE Update the prompt and prediction head using \(\mathcal{L}_{imp}\) in \Cref{eq:dwnimploss}
\ELSE
    \STATE Encode \(\mathcal{G}_p\) with the prompt GNN to obtain \(\mathbf{v}_p\)
    \STATE Construct prompted attributes using \Cref{eq:downlsc}
    \STATE Update the prompt GNN and prediction head using \(\mathcal{L}_{lsc}\) in \Cref{eq:lscobj}
\ENDIF
\STATE \textbf{Output:} Frozen shared backbone and task-specific prompt module
\end{algorithmic}
\end{algorithm}


\section{Theoretical Insights}
\label{sec:analysis}
\subsection{Prompting as Input-Space Adaptation for Ranking}
Learnable prompt vectors have been shown to be an effective way to adapt pre-trained models by transforming inputs in the feature space~\cite{fang2024universal}.
In this work, we go one step further and consider computed prompts: instead of directly optimizing a prompt vector, we generate the prompt via a lightweight GNN from a pattern graph, which enables structure-aware adaptation.
Below we provide a theoretical insight showing that such GNN-generated prompting can realize function adaptations that are not easily achievable by fine-tuning constrained around a fixed pre-trained solution.
\begin{definition}[Non-degeneracy condition~\cite{bishop2006pattern}]
\label{def:nondegeneracy}
For each graph, the diffusion operator is induced by a connected graph, the feature matrix has full column rank, and the target scores are distinct. These conditions exclude collinear feature bases and collapsed targets for which neither prompting nor fine-tuning can distinguish all training instances.
\end{definition}
\begin{lemma}[Vector-prompt effectiveness~\cite{fang2024universal}]
\label{lemma:promptvec_main}
For a non-degenerate collection of graphs and a fixed pre-trained GNN, there exists a target assignment for which optimizing a free input prompt vector together with a linear prediction head attains strictly lower empirical squared loss than optimizing a prediction head while constraining the backbone to the fine-tuning function class around its pre-trained solution.
\end{lemma}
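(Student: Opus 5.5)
We prove the lemma by an explicit construction: a frozen linear GNN and a target assignment that the prompted model fits exactly while the constrained fine-tuned model cannot. Concretely, we take the pre-trained GNN to be a one-layer linear diffusion model $f(\mathbf{X},\mathbf{A}) = \hat{\mathbf{A}}\mathbf{X}\mathbf{W}^{*}$, where $\hat{\mathbf{A}}$ is the (normalized) diffusion operator of each graph and $\mathbf{W}^{*}$ is the pre-trained weight. We follow a graph-level pooled readout as in~\cite{fang2024universal}. The \emph{fine-tuning function class around the pre-trained solution} is read as $\{\hat{\mathbf{A}}\mathbf{X}(\mathbf{W}^{*}+\Delta)\,:\,\|\Delta\|\le\epsilon\}$, composed with a linear head $\mathbf{h}$. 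Prompting keeps $\mathbf{W}^{*}$ fixed, replaces $\mathbf{X}$ by $\mathbf{X}+\mathbf{1}\mathbf{v}$ as in \Cref{eq:imprompt}, and trains $\mathbf{v}$ and a linear head $\mathbf{h}$.

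The first step is to compute both model families in closed form. For graph $g$ with pooled readout $\mathbf{1}^T$, the prompted output is
\begin{equation*}
\mathbf{1}^T\hat{\mathbf{A}}_g(\mathbf{X}_g+\mathbf{1}\mathbf{v})\mathbf{W}^{*}\mathbf{h} \;=\; \mathbf{a}_g^T\mathbf{X}_g\mathbf{W}^{*}\mathbf{h} + c_g\,\mathbf{v}\mathbf{W}^{*}\mathbf{h},
\end{equation*}
where $\mathbf{a}_g=\hat{\mathbf{A}}_g^T\mathbf{1}$ and $c_g=\mathbf{1}^T\hat{\mathbf{A}}_g\mathbf{1}$. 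Because each graph is connected (non-degeneracy, \Cref{def:nondegeneracy}), we have $c_g>0$, and $c_g$ varies with graph size and structure. The prompt therefore contributes a graph-dependent offset $c_g s$ with $s=\mathbf{v}\mathbf{W}^{*}\mathbf{h}$, a free scalar that can be chosen independently of the first term whenever $\mathbf{W}^{*}\mathbf{h}\neq 0$. The fine-tuned output is instead $\mathbf{a}_g^T\mathbf{X}_g(\mathbf{W}^{*}+\Delta)\mathbf{h}$, which is always linear in the pooled features $\mathbf{z}_g=\mathbf{X}_g^T\mathbf{a}_g$.

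Next we pick the targets. Take two graphs, or one more than the feature rank allows, whose pooled features are proportional, $\mathbf{z}_2=\gamma\mathbf{z}_1$, but whose diffusion masses satisfy $c_2\neq\gamma c_1$. We then set distinct targets $y_1,y_2$ with $y_2\neq\gamma y_1$. The fine-tuned class can only output $(t,\gamma t)$ for these two graphs, so its loss is bounded below by the squared distance from $(y_1,y_2)$ to that line, which is positive. The prompted class outputs $(\gamma' + c_1 s,\ \gamma\gamma' + c_2 s)$, which spans $\mathbb{R}^2$ when $c_2\neq\gamma c_1$, so it achieves zero loss. This already gives strict inequality, and holds for any $\epsilon$, even unconstrained.

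The main obstacle is reconciling this construction with the non-degeneracy condition. Full column rank of each $\mathbf{X}_g$ does not prevent the pooled vectors from being collinear, but we must show that feature matrices and graphs exist realizing $\mathbf{z}_2=\gamma\mathbf{z}_1$ with $c_2\neq\gamma c_1$. We would do this with a small explicit example, e.g.\ a path and a triangle with scaled identity-like features, verified directly. If collinearity proves awkward, the fallback is to use the $\epsilon$-constraint: choose targets whose least-squares fit needs $\|\Delta\|>\epsilon$, while the prompt reaches them through $s$ without any constraint.
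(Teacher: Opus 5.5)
Your closed-form computation is correct and isolates the right mechanism. Every fine-tuned predictor, with or without the $\epsilon$-ball, outputs a vector $Z\mathbf{w}$ in the column span of the pooled-feature matrix $Z$ (rows $\mathbf{z}_g^T$). The prompt instead adds $s\,\mathbf{c}$, where $\mathbf{c}=(c_g)_g$ and $s$ is free once $\mathbf{W}^{*}\mathbf{h}\neq 0$. Read existentially, your argument is essentially complete.

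The gap is in the quantifiers. You \emph{choose} the graphs (a pair with $\mathbf{z}_2=\gamma\mathbf{z}_1$). The lemma, however, is about a given non-degenerate collection, and that is how it is used in the proof sketch of \Cref{theorem:promptgnn}, where $\mathcal{D}$ is fixed in advance. The paper itself gives no proof of this lemma; it imports it from the cited work.

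For a given collection, your computation proves the claim exactly when $\mathbf{c}\notin\mathrm{col}(Z)$. In that case take $\mathbf{y}=Z\mathbf{W}^{*}\mathbf{h}_0+s_0\mathbf{c}$ with $\mathbf{W}^{*}\mathbf{h}_0\neq 0$ and $s_0\neq 0$, which gives $l_p=0<s_0^2\Vert P^{\perp}_{Z}\mathbf{c}\Vert^2\le l_{ft}$. This condition does not follow from \Cref{def:nondegeneracy}. Take $n\le d$ connected graphs with full-column-rank features whose pooled vectors $\mathbf{z}_g$ are linearly independent, and an invertible $\mathbf{W}^{*}\in\mathbb{R}^{d\times d}$. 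Then $\Delta=0$ with a free head already solves $Z\mathbf{W}^{*}\mathbf{h}=\mathbf{y}$ for every $\mathbf{y}$, so $l_{ft}=0$ and no target assignment gives a strict inequality.

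The missing ingredient is therefore an explicit extra hypothesis such as $\mathbf{c}\notin\mathrm{col}(Z)$, which forces $\mathrm{rank}(Z)<n$. Your collinear pair is one special way of manufacturing it. Without such a hypothesis you have only shown that \emph{some} non-degenerate collection admits separating targets. You also use $\mathbf{W}^{*}\neq 0$ silently; for $\mathbf{W}^{*}=0$ the prompt is inert and the claim fails.

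Your fallback via the $\epsilon$-constraint does not repair this. Because the head is unconstrained, $(\mathbf{W}^{*}+\Delta)\mathbf{h}$ ranges over all of $\mathrm{col}(\mathbf{W}^{*}+\Delta)$. When the hidden width is at least $d$, an arbitrarily small $\Delta$ makes $\mathbf{W}^{*}+\Delta$ full row rank. So the $\epsilon$-ball already realizes every functional $\mathbf{z}\mapsto\mathbf{z}^T\mathbf{w}$, and there are no targets whose fit ``needs'' $\Vert\Delta\Vert>\epsilon$. The only source of strictness is $\mathbf{c}\notin\mathrm{col}(Z)$.

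On the other hand, the obstacle you flag is not a real one for the existential version, and no hand-checked path/triangle example is needed. Since $\mathbf{a}_2\neq 0$, for $\mathbf{z}_1\neq 0$ and any $\gamma\neq 0$ you can pick a full-column-rank $\mathbf{X}_2$ with $\mathbf{X}_2^T\mathbf{a}_2=\gamma\mathbf{z}_1$. Then simply take $\gamma\neq c_2/c_1$.
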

\begin{theorem}
\label{theorem:promptgnn}
    For a series of graphs  \(\mathcal{D} = \{(\mathcal{G}_1,y_1),\dots,(\mathcal{G}_n,y_n)\}\), a pre-trained GNN model $f_{W_0}=(f_1,\cdots, f_n)$ with each $f_i(X_i)= \mathbbm{1} \cdot S_i X_i W_0$, and an arbitrary pattern graph \(\mathcal{G}_p\), under the non-degeneracy condition, a linear projection head $\theta$ and a GNN \(f_p\in \mathcal{F}\) for prompting, there exists \(\mathcal{Y}' = \{y'_1,\dots,y'_n\}\) for which we have
        \begin{equation}
        \begin{aligned}
            l_p &= \min_{f_p\in \mathcal{F},\theta} \sum_{i=1}^{n}(f_i(X_i + \mathbbm{1}^T\cdot f_p(X_p)) \cdot \theta - y_i)^2\\
            &< l_{ft} = \min_{\tilde{f}\in \tilde{\mathcal{F}},\theta}  \sum_{i=1}^{n}(\tilde{f}_i(X_i ) \cdot \theta - y_i)^2
        \end{aligned}
    \end{equation}
    when $y_1 = y'_1, \dots, y_n = y'_n$. Here $\mathbbm{1}$ is a vector with all items equals to 1 and with a proper size that changes from line to line ,$\tilde{\mathcal{F}}$ is defined as 
    \begin{equation*}
    \begin{aligned}
\tilde{\mathcal{F}} &= \left\{ \tilde{f} = (\tilde{f}_1, \cdots, \tilde{f}_n) : \tilde{f}_i = \mathbbm{1} \cdot S_i X_i W, 1 \leq i \leq n \right. \\
&\quad \left. \text{with} \; \forall \; (W - W_0) \right\}\label{FunctionSpaceapp}.
    \end{aligned}
    \end{equation*}
    , and $\mathcal{F}$ is defined as
    \begin{equation*}
    \begin{aligned}
    \mathcal{F}=\{ f_p:\;& f_p= \mathbbm{1}\cdot S_p X_p W_p,\\
    &{\rm with}\; \forall\; (W_p- X_p^T X_p ) \in \mathbb{R}^{d \times d} \}.
    \end{aligned}
    \end{equation*}
\end{theorem}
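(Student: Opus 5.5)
The approach is to reduce both minimisations to ordinary least-squares problems over explicit feature vectors, and then show that prompting sees one extra feature direction that fine-tuning cannot reach. Write $a_i^\top := \mathbbm{1}\cdot S_i X_i \in \mathbb{R}^{1\times d}$ for the pooled, diffused features of graph $i$, and stack them into $A\in\mathbb{R}^{n\times d}$. Write $c_i := \mathbbm{1}\cdot S_i \mathbbm{1}^T \in \mathbb{R}$ for the total diffusion mass of graph $i$, and collect these into $c\in\mathbb{R}^n$. Finally, set $b := \mathbbm{1}\cdot S_p X_p \in \mathbb{R}^{1\times d}$.

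\textbf{Fine-tuning side.} Every $\tilde f\in\tilde{\mathcal F}$ gives predictions $a_i^\top W\theta$. The pair $(W,\theta)$ enters only through $w=W\theta\in\mathbb{R}^d$, and $w$ ranges over all of $\mathbb{R}^d$. Hence
\[ l_{ft}=\min_{w}\|Aw-y\|^2=\|(I-P_A)y\|^2 , \]
where $P_A$ is the orthogonal projector onto $\mathrm{col}(A)$. If $\tilde{\mathcal F}$ is read with a genuine constraint on $W-W_0$, the fine-tuning class only shrinks, so $l_{ft}$ can only grow and the inequality below only becomes easier.

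\textbf{Prompting side.} The prompt is $f_p(X_p)=bW_p$, and $W_p$ is free (the shift by $X_p^TX_p$ is irrelevant). If $b\neq 0$, the vector $v:=bW_p$ ranges over all of $\mathbb{R}^{1\times d}$. By linearity the prompted prediction is
\[ \mathbbm{1}\cdot S_i(X_i+\mathbbm{1}^Tv)W_0\theta = a_i^\top u + c_i\,(v u), \qquad u:=W_0\theta . \]
For any fixed $u\neq 0$, the scalar $t=vu$ is arbitrary. Consequently the prompting model can realise every vector of the form $Au+tc$ with $u\in\mathrm{range}(W_0)$ and $u\ne 0$.

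\textbf{Choice of targets and the conclusion.} Fix any $u_0=W_0\theta_0\neq 0$; this needs $W_0\neq 0$, which I would take as part of the pre-trained model being non-trivial. Define
\[ y' := Au_0 + c . \]
Choosing $\theta=\theta_0$ and a $W_p$ with $bW_pu_0=1$ gives $l_p=0$. For fine-tuning, $(I-P_A)y'=(I-P_A)c$, so
\[ l_{ft}=\|(I-P_A)c\|^2 . \]
The theorem therefore follows once we know
\begin{equation*}
c\notin \mathrm{col}(A)\quad\text{and}\quad b\neq 0 .
\end{equation*}

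\textbf{Main obstacle: justifying these two conditions.} Nothing in the algebra above forces them. For instance, if every $X_i$ contains a constant column, then $c\in\mathrm{col}(A)$ and the gap disappears. I would draw both conditions from the non-degeneracy condition (Definition~\ref{def:nondegeneracy}).

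\begin{itemize}
\item For $b\neq 0$: connectedness of the graph underlying $S_p$ makes $\mathbbm{1}S_p$ a nonzero row (positive for a diffusion operator). Full column rank of $X_p$ then gives $\mathbbm{1}S_pX_p\neq 0$.
\item For $c\notin\mathrm{col}(A)$: first use $n\ge d+1$ graphs, which is possible since the theorem lets us choose the collection size. Then argue that $[A\;c]$ has rank $d+1$, i.e.\ that the constant direction carries information not already present in the pooled features. If this cannot be derived directly, I would make it explicit as the interpretation of ``excludes collinear feature bases'' in Definition~\ref{def:nondegeneracy}, mirroring how Lemma~\ref{lemma:promptvec_main} is obtained in \cite{fang2024universal}.
\end{itemize}

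Finally, the GNN-generated prompt reproduces the free vector prompt of Lemma~\ref{lemma:promptvec_main} exactly when $v\mapsto bW_p$ is surjective. So the separation established there transfers to the class $\mathcal F$, and the argument above makes this transfer explicit.
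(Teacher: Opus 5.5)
Your argument contains the paper's proof and then goes further. The paper's sketch does only the reduction in your final paragraph. If $b_p=\mathbbm{1}S_pX_p\neq 0$, an explicit $W_p$ gives $b_pW_p=\mathbf{v}$ for every $\mathbf{v}$; the paper builds it from an orthonormal basis extending $b_p/\lVert b_p\rVert_2$, and a rank-one choice $b^\top v/\lVert b\rVert^2$ works equally well. So GNN-generated prompts contain free vector prompts, and the paper then takes the strict inequality from Lemma~\ref{lemma:promptvec_main} as a black box. You prove the separation directly instead: fine-tuning reaches exactly $\mathrm{col}(A)$, prompting reaches $Au+tc$ with $u\in\mathrm{range}(W_0)$, and $y'=Au_0+c$ gives $l_p=0<\lVert(I-P_A)c\rVert^2=l_{ft}$.

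This buys transparency. It shows that only $W_0\neq 0$ is needed of the backbone, and it isolates the hypothesis $c\notin\mathrm{col}(A)$ that the paper hides inside the citation. Your constant-column example correctly shows this is not implied by Definition~\ref{def:nondegeneracy} as written, so adding it to the non-degeneracy assumption is the right move. You cannot, however, obtain it by choosing $n\ge d+1$. The collection $\mathcal{D}$ is given and only $\mathcal{Y}'$ is existential. If $\mathrm{rank}\,A=n$ (e.g.\ $n\le d$ with generic features), then $l_{ft}=0$ for every target. So the rank condition on $[A\;c]$ must be assumed outright.

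One step is wrong as written: full column rank of $X_p$ does not imply $\mathbbm{1}S_pX_p\neq 0$. Column rank controls $X_px$ for column vectors $x$. But $b=rX_p$, with the row $r=\mathbbm{1}S_p$, vanishes whenever $r^\top$ lies in the left null space of $X_p$. That space is nontrivial as soon as the pattern has more nodes than feature dimensions. For example, take a single edge with $S_p=\tfrac12\begin{pmatrix}1&1\\1&1\end{pmatrix}$ and $X_p=(1,-1)^\top$. The graph is connected and the features have full column rank, yet $b=0$. Every prompt is then zero, so $l_p\ge l_{ft}$ for all targets. The paper's sketch asserts $b_p\neq 0$ from the same definition without justification, so this is a shared gap rather than a divergence. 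Still, $b\neq 0$ should be stated as an explicit (generically satisfied) hypothesis rather than derived.
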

\begin{proof}[Proof sketch]
Let \(b_p=\mathbbm{1}S_pX_p\) denote the pooled representation of the pattern graph. Under \Cref{def:nondegeneracy}, \(b_p\neq 0\). We show that the GNN-generated prompt can realize any free prompt vector \(\mathbf{v}\) used by the vector-prompt function class. Let \(b_1=b_p/\lVert b_p\rVert_2\), extend \(b_1\) to an orthonormal basis \(\{b_1,\ldots,b_d\}\), and define \(\widetilde{W}=(b_1^T,\ldots,b_d^T)\). For \(e_1=(1,0,\ldots,0)\), choose
\begin{equation}
W_p=\frac{1}{\lVert b_p\rVert_2}
\left[b_1^T(\mathbf{v}-e_1)+\widetilde{W}\right].
\end{equation}
Then
\begin{equation}
\mathbbm{1}S_pX_pW_p=b_pW_p=\mathbf{v}.
\end{equation}
Hence, for every free vector prompt, there exists a pattern-GNN parameterization that produces the same input transformation. The GNN-generated prompt class therefore contains the vector-prompt class. Combining this construction with \Cref{lemma:promptvec_main} yields the stated strict inequality.
\end{proof}
\Cref{theorem:promptgnn} suggests that, even when the prompt is not a free vector but is generated by message passing over a pattern graph, input-space prompting can strictly enlarge the effective function class compared with fine-tuning around a fixed pre-trained solution.
This provides a theoretical justification for using GNN-generated prompts to condition downstream scoring on structural measures.
\subsection{Expressive Power of the Subgraph Backbone}
Message-passing GNNs are bounded by the discriminative power of the one-dimensional Weisfeiler--Lehman (1-WL) test when their aggregation and update functions are injective~\cite{xu2018powerful,shervashidze2011weisfeiler}.
This limitation is relevant to node ranking because structurally distinct candidates can receive indistinguishable representations even when their local roles imply different relevance scores.
Subgraph neural networks alleviate this limitation by representing every target node through its rooted ego network. Consequently, the same graph node can contribute different representations when it appears in different rooted subgraphs, whereas a conventional message-passing GNN assigns it a single global representation.

The additional expressive power of this construction has been characterized for regular graphs~\cite{zhang2021nested,chen2020can}. In particular, consider pairs of \(n\)-node \(r\)-regular graphs with \(3\leq r<\sqrt{2\log n}\). For any constant \(\epsilon>0\), an SNN with injective message-passing and readout functions can distinguish almost all such pairs using at most
\begin{equation}
\left\lceil
\left(\frac{1}{2}+\epsilon\right)
\frac{\log n}{\log(r-1-\epsilon)}
\right\rceil
\end{equation}
hops for ego-network extraction and
\(\left\lceil\epsilon\frac{\log n}{\log(r-1-\epsilon)}\right\rceil\)
message-passing layers~\cite{zhang2021nested}.
Thus, an SNN can separate graph structures that remain indistinguishable to standard message-passing GNNs without requiring ego networks that span the full graph. This result complements the ranking motivation of our design: rooted subgraphs provide node-specific structural evidence, while the bounded extraction radius keeps the representation local enough for scalable pre-training.
\subsection{Why Subgraph Neural Networks Mitigate Over-Smoothing}
We also justify the use of subgraph neural networks (SNNs) as the backbone for ranking-native pre-training.
Ranking relies on fine-grained discrimination among candidates; over-smoothing in deep GNNs collapses node representations and harms score comparability.
In contrast, SNNs restrict message passing within ego-networks, where convergence is controlled by local spectral gaps.
The formal description for \Cref{theorem:over_smooth} is:
\begin{theorem}
\label{theorem:over_smooth}
Let $G=(V, E)$ be a connected graph with normalized Laplacian $\mathbf{L}$ and second smallest eigenvalue $\lambda_2 > 0$. Let $\{G_i = (V_i, E_i)\}_{i \in V}$ be the set of $k$-hop ego-network subgraphs for an SNN, each with normalized Laplacian $\mathbf{L}_i$ and second smallest eigenvalue $\lambda_{2,i}$. Let $\lambda_{2, \text{min}} = \min_{i \in V} \{\lambda_{2,i}\}$. The convergence suppression rate to an over-smoothed state $\lambda_{2,min} \geq \lambda_2$.
\end{theorem}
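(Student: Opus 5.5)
We would argue through the variational (Courant--Fischer) characterization of the second eigenvalue of a normalized Laplacian and compare the ego-network Rayleigh quotients directly against that of $G$. Write $\mathbf{L}=\mathbf{I}-\mathbf{D}^{-1/2}\mathbf{A}\mathbf{D}^{-1/2}$. Then
\begin{equation*}
\lambda_2=\min_{g\neq 0,\ \sum_{u\in V} d_u g(u)=0}\ \frac{\sum_{(u,w)\in E}(g(u)-g(w))^2}{\sum_{u\in V} d_u g(u)^2},
\end{equation*}
and the same formula holds for each $\lambda_{2,i}$, with $E_i$ and the ego-network degrees $d^{(i)}_u$ in place of $E$ and $d_u$. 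Since $\lambda_{2,\min}$ is a minimum over $i$, it suffices to fix an arbitrary centre $i$ and prove $\lambda_{2,i}\ge\lambda_2$.

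\textbf{Step 1: lift test functions.} Take a minimizer $f$ on $V_i$, normalized so that $\sum_{u\in V_i} d^{(i)}_u f(u)=0$. Extend it to $g$ on all of $V$ by a hop-wise constant extension: every vertex outside $V_i$ receives the value of its nearest hop-$k$ boundary vertex. Then re-centre $g$ so that it is orthogonal to $\mathbf{D}\mathbbm{1}$ in $G$. Re-centring only lowers the denominator's excess relative to the numerator, so $g$ is an admissible test function for $\lambda_2$.

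\textbf{Step 2: compare the two quotients.} Edges inside $V_i$ contribute identically to both numerators. Edges strictly outside $V_i$ contribute zero, because $g$ is constant along the extension. This leaves two discrepancies. First, the crossing edges between the hop-$k$ layer and hop $k{+}1$ add nonnegative terms to the numerator of $g$. Second, the degrees $d_u\ge d^{(i)}_u$ enlarge the denominator on boundary vertices. To get $R_G(g)\le R_{G_i}(f)$, and hence $\lambda_2\le\lambda_{2,i}$, both effects must be absorbed. We would try to absorb them by exploiting the free choice of extension on the boundary layer, for instance by making $g$ harmonic across the crossing edges so that their contribution is balanced by the extra denominator mass $\sum_u (d_u-d^{(i)}_u)f(u)^2$.

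\textbf{Main obstacle.} The main obstacle is exactly this boundary bookkeeping in Step 2. An induced subgraph can in principle have a worse bottleneck than $G$, so interlacing for principal submatrices does not by itself give the inequality in this direction. Our first attempt would be to use Cauchy interlacing on $\mathbf{D}^{-1/2}\mathbf{A}\mathbf{D}^{-1/2}$ restricted to $V_i$. This gives that the restricted operator's second eigenvalue is controlled by that of $G$, and we would then bound the difference between the restricted operator and $\mathbf{L}_i$, which comes only from the degree renormalization on the hop-$k$ layer.

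If this correction cannot be signed in general, we would read the statement as a claim about the per-layer contraction of the SNN's propagation operator on $V_i$. In that reading the interlacing bound gives the inequality directly. The smallest ego gap $\lambda_{2,\min}$ then upper-bounds the rate $(1-\lambda)^{\ell}$ at which within-subgraph representations collapse, which is the convergence-suppression interpretation intended in the theorem.
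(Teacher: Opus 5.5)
\textbf{Verdict: there is a genuine gap, and no proof can close it, because the claimed inequality is false.} For normalized Laplacians of induced $k$-hop ego networks, $\lambda_{2,\min}\ge\lambda_2$ does not hold in general. The paper states this theorem without proof, so there is no argument there to compare against.

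\textbf{Counterexample.} Take $G=C_6$ and $k=2$. The antipodal vertex is at distance $3$, and the two hop-$2$ vertices are not adjacent, so every ego network is the path $P_5$. Its normalized Laplacian has eigenvalues $1-\cos(\pi j/4)$, $j=0,\dots,4$. Hence $\lambda_{2,\min}=1-\cos(\pi/4)\approx 0.293$, whereas $\lambda_2(C_6)=1-\cos(\pi/3)=0.5$.

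More generally, $C_n$ fails whenever $2k+2\le n\le 4k-1$: the ego gap is $1-\cos(\pi/(2k))$ and the global gap is $1-\cos(2\pi/n)$. The same $C_6$ example also fails with self-loop renormalization ($1/6$ versus $1/3$). The failure can also be made arbitrarily large. For $k=2$, hang two dense $m$-cliques off two neighbours of a centre, and join them only through a dense cluster at distance $3$. That centre's ego gap is then $O(1/m^2)$, while $\lambda_2$ stays bounded below by a constant.

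\textbf{Consequence for Steps 1 and 2.} The obstruction you flag in Step 2, that an induced subgraph can have a worse bottleneck than $G$, is the actual behaviour and not a bookkeeping nuisance. No choice of extension can absorb the crossing-edge terms. Step 1 is also inverted on its own terms. Subtracting the $\mathbf{D}$-weighted mean minimizes $\sum_u d_u(g(u)-c)^2$ and leaves the numerator unchanged, so re-centring can only increase $R_G(g)$.

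\textbf{The fallback proves a different statement.} Cauchy interlacing on the principal submatrix $(\mathbf{D}^{-1/2}\mathbf{A}\mathbf{D}^{-1/2})_{V_i}$ correctly shows that $\mathbf{I}-(\mathbf{D}^{-1/2}\mathbf{A}\mathbf{D}^{-1/2})_{V_i}$ has second smallest eigenvalue at least $\lambda_2$. However, this is a Dirichlet-type operator built with the global degrees, and it differs from $\mathbf{L}_i$ in three ways.
\begin{enumerate}
\item Its smallest eigenvalue is strictly positive whenever $V_i\neq V$. For $C_6$ with $k=2$ its spectrum is $1-\cos(\pi j/6)$, $j=1,\dots,5$.
\item Its second eigenvalue is therefore not a gap to a constant, over-smoothed state.
\item It is not the operator an SNN iterates. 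Message passing on the extracted ego network uses the ego degrees, i.e.\ $\mathbf{L}_i$.
\end{enumerate}
The ``degree renormalization on the hop-$k$ layer'' that you hoped to bound is exactly what flips the inequality. In $C_6$ it takes $0.5$ down to $0.293$.

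\textbf{The closing interpretation runs backwards.} The slowest non-constant mode contracts like $(1-\lambda)^\ell$, so a larger gap means faster collapse. Thus $\lambda_{2,\min}\ge\lambda_2$, wherever it holds, would say ego networks over-smooth at least as fast as $G$, not that over-smoothing is suppressed. A sound mitigation argument has to rest on something other than comparing these gaps. One candidate is that each centre's readout is taken over a different subgraph, so within-subgraph collapse does not force different centres' representations to coincide.
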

This provides a theoretical motivation for SNNs in our setting: by mitigating over-smoothing, SNNs preserve representational diversity, which is crucial for stable ranking and effective prompt-based adaptation.

\subsection{Time Complexity Analysis}
In pre-processing phase, $k$-hop ego network extraction is applied, whose time complexity is \(\mathcal{O}(|V| \times |d_{max}|^k)\), where \(d_{max}\) is the maximum degree of extracted ego network.

For pre-training and prompt tuning process, we focus on their \textbf{inference time complexity}. The influence of hyper-parameters such as hidden dimensions and the number of network layers is excluded from this analysis, as they are adjustable constants and do not pertain directly to graph properties, albeit influencing practical execution time.

In the pre-training phase, the inference time complexity of each module is outlined as follows:
\begin{enumerate}
    \item Subgraph neural network: \(\mathcal{O}(|V| \times |V_{ego}| \times d_{ego})\),
with \(|V_{ego}|\) and \(d_{ego}\) indicating the maximum number
of vertices and the maximum degree in each subgraph, respectively;
    \item Attention module: \(\mathcal{O}(|V|^2)\).
\end{enumerate}
 Therefore, the comprehensive inference time complexity of the pre-training phase is \(\mathcal{O}(|V| \times |V_{ego}| \times d_{ego} + |V|^2)\).

Compared with the pre-training phase, the Transformer projection head is replaced by a multi-layer perceptron to improve efficiency. The inference time complexity of each module in prompt tuning is outlined as follows:
\begin{enumerate}
    \item Subgraph neural network: \(\mathcal{O}(|V| \times |V_{ego}| \times d_{ego})\);
    \item Prompt GNN 
    : \(\mathcal{O}(|V|_p \times d_p)\), where $|V|_p$ is the maximum number of nodes in pattern graph and $d_p$ is the maximum degree of pattern graph, respectively.
    \item Projection head: \(\mathcal{O}(|V|)\).
\end{enumerate}
The comprehensive time complexity of prompt tuning is \(\mathcal{O}(|V| \times (|V_{ego}| \times d_{ego})+ |V|_p \times d_p)\). 
With the pattern graph's size fixed, as the node number \(|V|\) increases, our approach exhibits approximately linear growth in inference time, demonstrating strong scalability.

\section{Experiments}
\label{sec:exp}
In this section, we present our experimental settings and analysis.
\subsection{Experimental Settings}
\noindent \textbf{Datasets}.
We evaluate PreGress inductively on six public graphs: soc-brightkite (BK), socfb-OR (FB), Flickr (FL), youtube (YT), web-spam (WS), indochina-2004 (ID) \cite{nr}, Yelp2018\cite{wang2019ngcf} and MovieLens-100K(MovieLens) \cite{harper2016movielens}. Their statistics are shown in \Cref{tbl:dataset}.
Following prior work, we treat each dataset as a simple undirected graph.\par
\begin{wraptable}{r}{0.64\textwidth}
\small
\caption{Statistics of the eight evaluation graphs.}
\label{tbl:dataset}
\resizebox{\linewidth}{!}{
\begin{tabular}{c|c|c|c|c|c}
   \toprule
   Dataset & Notation&Category&\( |V| \) & \( |E| \) & avg. deg. \\
   \midrule
   web-spam &WS&Email net.&$ 4.8 \times 10^3$& $3.7 \times 10^4$ &15.7\\
   indochina-2004&ID &Web net.& $1.1 \times 10^4$ & $4.7 \times 10^4$ & 8.0 \\
   soc-brightkite&BK &Social net.& $5.7\times 10^4$ & $2.1\times 10^5$& 7.0  \\
    socfb-OR &FB&Facebook net.&$6.3 \times 10^4$&$8.2\times 10^5$&25.0\\
   Flickr&FL &Image net.& $8.9 \times 10^4$ & $8.9 \times 10^5$ & 10.0 \\
   youtube &YT&Video net.& $1.1\times 10^6$ & $3.0\times 10^6$ & 5.3  \\
   Yelp2018       & YP & User--item & $7.0 \times 10^4$ & $1.2 \times 10^6$ & 34.6 \\
MovieLens-100K  & ML & User--item & $2.3 \times 10^3$ & $5.2 \times 10^4$ & 44.2 \\
   \bottomrule
\end{tabular}
}
\end{wraptable}

\noindent \textbf{Ranking measures.}
We study node ranking using two downstream measures:
betweenness centrality prediction and local subgraph counting; to provide an edge-level contrast, we further evaluate link prediction as an additional downstream task.
The first two tasks define ranked lists over nodes by sorting predicted relevance scores, while link prediction ranks candidate edges for each query node.

For local subgraph counting, we follow SCOPE \cite{li2024fast} and tune on 515 pattern graphs in total: 58 patterns with 5 nodes, 407 patterns with 6 nodes, and 50 patterns with 7 nodes. The ground truth labels for local subgraph counting is obtained by SCOPE \cite{li2024fast}.

\noindent \textbf{Baseline methods}. To align with the paradigm taxonomy in \Cref{tab:node_ranking_paradigms_binary}, we compare methods from the following categories:
\begin{enumerate*}
    \item Non-learning rankers that compute importance scores directly on graphs:
    exact centrality solver \cite{brandes2001faster} and subgraph counting engines (SCOPE \cite{li2024fast}, DISC \cite{zhang2020distributed}).
    \item Supervised rankers that require training a task-specific model for each ranking measure: GNN encoders (GCN \cite{kipf2016semi}, GAT \cite{velivckovic2017graph}, GraphSAGE \cite{hamilton2017inductive}, GIN \cite{xu2018powerful}, etc.) with regression- or ranking-style objectives.
    \item Pre-training based adaptation methods that reuse a pretrained backbone and adapt to a new measure via fine-tuning (e.g., SimGRACE \cite{xia2022simgrace}, GraphMAE \cite{hou2022graphmae}, AdapterGNN \cite{li2024adaptergnn}) or prompt tuning (e.g., GPPT \cite{sun2022gppt}, All-in-One \cite{sun2023all}, GraphPrompt \cite{liu2023graphprompt}).
\end{enumerate*}

\noindent \textbf{Evaluation Metrics.} Our primary goal is to induce correct rankings. For node-ranking tasks, we report ranking quality using \texttt{NDCG@10} and \texttt{NDCG@20}. For link prediction, we follow standard edge-ranking evaluation and report source-grouped \texttt{MRR}, \texttt{Hits@10}, and \texttt{Hits@20}. For efficiency, we report average per-query inference time.

\noindent \textbf{Implementation and hyper-parameter settings.}
We implement PreGress with PyTorch and PyTorch Geometric. Unless otherwise specified, we extract \(2\)-hop ego networks and use a five-layer SNN with sum pooling. The input feature dimension is \(64\), the SNN output dimension is \(128\), and both pre-training prediction modules use three-layer Transformers with an attention dimension of \(64\). The pre-training loss weight \(\alpha\) and mask ratio \(\sigma\) are selected from \([0.1,0.9]\).
During prompt tuning, the learnable centrality prompt has the same dimension as the SNN output. For local subgraph counting, the prompt GNN contains three layers, its hidden dimension is selected from \(\{32,64,128\}\), and its output is passed to a three-layer MLP. The CCA regularization weight \(\beta\) is selected from \([0.1,0.9]\).
Both stages use Adam~\cite{kingma2014adam} with a learning rate selected from \([10^{-5},10^{-3}]\), weight decay from \([10^{-5},10^{-2}]\), and an ExponentialLR scheduler~\cite{li2019exponential} with decay factor from \([0.5,0.9]\). We tune batch size in \(\{4,8,16,32,64\}\) and train for \(80\)--\(200\) epochs depending on the dataset. Each downstream dataset is split into \(80\%\), \(10\%\), and \(10\%\) for training, validation, and testing, respectively. Baseline hyper-parameters are tuned using their released implementations under the same data splits. DISC is executed with Spark 2.4.3 in a single-machine configuration using 32 threads.
All experiments run on a server with Ubuntu 24, four 28-core Intel Xeon Gold 6330 CPUs, eight NVIDIA RTX A5000 GPUs, and 1\,TB RAM.

\begin{table*}[htbp]
    \centering
    \caption{Comparison of ranking performance w.r.t betweenness centrality prediction across six datasets. The best results are \textbf{bolded} and the second-best results are \underline{underlined}. }
    \label{tab:betweenness_mape}
\resizebox{\textwidth}{!}{
\begin{tabular}{l|c|cc|cc|cc|cc|cc|cc}
\toprule
        \multirow{2}{*}{\textbf{Scheme}} & \multirow{2}{*}{\textbf{Method}} & 
        \multicolumn{2}{c|}{\textbf{BK}} & \multicolumn{2}{c|}{\textbf{FB}} & 
        \multicolumn{2}{c|}{\textbf{FL}} & \multicolumn{2}{c|}{\textbf{ID}} & 
        \multicolumn{2}{c|}{\textbf{WS}} & \multicolumn{2}{c}{\textbf{YT}} \\
        \cmidrule(lr){3-4} \cmidrule(lr){5-6} \cmidrule(lr){7-8} \cmidrule(lr){9-10} \cmidrule(lr){11-12} \cmidrule(lr){13-14}
        & & \textbf{NG@10} & \textbf{NG@20} & \textbf{NG@10} & \textbf{NG@20} & \textbf{NG@10} & \textbf{NG@20} & \textbf{NG@10} & \textbf{NG@20} & \textbf{NG@10} & \textbf{NG@20} & \textbf{NG@10} & \textbf{NG@20} \\
    \midrule
        \multirow{7}*{GNN} 
         & GCN         & 0.1799 & 0.1911 & 0.1296 & 0.1693 & 0.1047 & 0.1179 & 0.1153 & 0.1246 & 0.0966 & 0.1021 & 0.1530 & 0.1465 \\
         & GAT         & 0.1259 & 0.1379 & 0.1653 & 0.1591 & 0.1097 & 0.1149 & 0.1519 & 0.1685 & 0.1515 & 0.1407 & 0.1567 & 0.1499 \\
         & GraphSage   & 0.1912 & 0.1818 & 0.1727 & 0.1649 & 0.1437 & 0.1339 & 0.0940 & 0.1214 & 0.0915 & 0.0953 & 0.1398 & 0.1273 \\
         & GIN         & 0.1648 & 0.1937 & 0.1540 & \textbf{0.1868} & 0.1757 & 0.1508 & 0.1221 & 0.1044 & 0.1569 & 0.1486 & 0.1595 & 0.1228 \\
         & GraphGPS    & 0.1428 & 0.1566 & 0.0891 & 0.0770 & 0.1640 & 0.1598 & 0.1538 & 0.1428 & 0.1278 & 0.0906 & 0.1465 & 0.1401 \\
         & DeepGCN     & 0.1521 & 0.1147 & 0.1492 & 0.1332 & 0.1345 & 0.1438 & 0.1408 & 0.1060 & 0.1156 & 0.0951 & 0.1285 & 0.1224 \\
         & GENI        & 0.1382 & 0.1526 & 0.0903 & 0.0925 & 0.1414 & 0.1527 & 0.2367 & 0.2329 & 0.1397 & 0.0818 & 0.1319 & 0.1258 \\
        \hline
        \multirow{3}*{Finetuning} 
         & SimGRACE    & 0.1602 & 0.1861 & 0.1204 & 0.1252 & \underline{0.5902} & 0.5054 & 0.2790 & 0.4192 & 0.1371 & 0.1341 & \underline{0.1940} & 0.1768 \\
         & GraphMAE    & 0.3355 & \underline{0.3465} & 0.1318 & 0.1471 & 0.4990 & \textbf{0.4897} & 0.2905 & 0.3378 & 0.1322 & 0.1310 & 0.1504 & 0.1225 \\
         & AdapterGNN  & \underline{0.3398} & 0.3271 & 0.1751 & 0.1567 & 0.2671 & 0.2838 & 0.2020 & 0.2744 & 0.1313 & 0.1226 & 0.1673 & 0.1620 \\
        \hline
        \multirow{4}*{Prompt tuning} 
         & GraphPrompt & 0.3319 & 0.2765 &0.1229 & 0.1228 & 0.4545 & 0.5017 & 0.3473 & 0.3632 & \underline{0.1662} & \underline{0.1872} & 0.1356 & 0.1311 \\
         & GPPT        & 0.3003 & 0.3019 & 0.1082 & 0.1091 & 0.4909 & 0.4211 & \underline{0.4373} & \textbf{0.4831} & 0.1344 & 0.1369 & 0.1555 & 0.1487 \\
         & AllinOne    & 0.2859 & 0.2793 & \underline{0.1757} & 0.1470 & 0.4066 & 0.4915 & 0.4008 & 0.3718 & 0.1315 & 0.1314 & 0.1800 & \underline{0.1781} \\
         & PreGress    & \textbf{0.3681} & \textbf{0.3610} & \textbf{0.1788} & \underline{0.1719} & \textbf{0.6667} & \textbf{0.5737} & \textbf{0.4820} & \underline{0.4579} & \textbf{0.1916} & \textbf{0.1879} & \textbf{0.1959} & \textbf{0.1805} \\
    \bottomrule
    \end{tabular}
}
\end{table*}
\begin{table*}[htbp]
    \centering
    \caption{Comparison of ranking performance w.r.t local subgraph counting across six datasets.}
    \resizebox{\textwidth}{!}{
    \begin{tabular}{c|c|cc|cc|cc|cc|cc|cc}
    \toprule
        \multirow{2}{*}{\textbf{Category}} & \multirow{2}{*}{\textbf{Methods}} & 
        \multicolumn{2}{c|}{\textbf{BK}} & \multicolumn{2}{c|}{\textbf{FB}} & 
        \multicolumn{2}{c|}{\textbf{FL}} & \multicolumn{2}{c|}{\textbf{ID}} & 
        \multicolumn{2}{c|}{\textbf{WS}} & \multicolumn{2}{c}{\textbf{YT}} \\
        \cmidrule(lr){3-4} \cmidrule(lr){5-6} \cmidrule(lr){7-8} \cmidrule(lr){9-10} \cmidrule(lr){11-12} \cmidrule(lr){13-14}
        ~ & ~ & \textbf{NG@10} & \textbf{NG@20} & \textbf{NG@10} & \textbf{NG@20} & \textbf{NG@10} & \textbf{NG@20} & \textbf{NG@10} & \textbf{NG@20} & \textbf{NG@10} & \textbf{NG@20} & \textbf{NG@10} & \textbf{NG@20} \\
    \midrule

    \multirow{1}*{Traditional} 
        & DISC & 0.1484 & 0.1611 & 0.1437 & 0.1332 & 0.1544 & 0.1376 & 0.1712 & 0.1699 & 0.1534 & 0.1478 & 0.1597 & 0.1589 \\
    \hline
    
    \multirow{4}*{Prompt Tuning}
        & GraphPrompt & \underline{0.1838} & 0.1613 & 0.1223 & 0.1213 & \textbf{0.2172} & \textbf{0.2362} & \underline{0.1745} & \underline{0.1678} & 0.1472 & 0.1388 & 0.1284 & 0.1312 \\
        & GPPT        &0.1078 & 0.1057 & 0.1675 & 0.1596 & 0.1511 & 0.1364 & 0.1655& 0.1676 & \underline{0.1774}  & 0.1540 & 0.1658 & 0.1348 \\
        & AllinOne    & 0.1762 & \underline{0.1705} & \underline{0.1810} & \underline{0.1724} & 0.1578 & 0.1301 & 0.1539 & 0.1550 & 0.1263 & \underline{0.1575} & \underline{0.1700} &\underline{0.1672} \\
        & PreGress    & \textbf{0.1847} & \textbf{0.1914} & \textbf{0.2095} &\textbf{ 0.2022} & \underline{0.1863 }& \underline{0.1820} & \textbf{0.1978} & \textbf{0.1877} & \textbf{0.1848} & \textbf{0.1752} & \textbf{0.2275} & \textbf{0.2254} \\
    \bottomrule
    \end{tabular}
    }
    \label{tab:lscacc}
\end{table*}
\begin{table*}[htbp]
    \centering
    \caption{Comparison of link prediction performance across six datasets. We report source-grouped \texttt{MRR}, \texttt{H@10}, and \texttt{H@20}, where H@K denotes Hits@K. The best results are \textbf{bolded} and the second-best results are \underline{underlined}.}
    \label{tab:link_prediction_hits_mrr}
\resizebox{\textwidth}{!}{
\begin{tabular}{l|c|ccc|ccc|ccc|ccc|ccc|ccc}
\toprule
        \multirow{2}{*}{\textbf{Scheme}} & \multirow{2}{*}{\textbf{Method}} & 
        \multicolumn{3}{c|}{\textbf{BK}} & \multicolumn{3}{c|}{\textbf{FB}} & 
        \multicolumn{3}{c|}{\textbf{FL}} & \multicolumn{3}{c|}{\textbf{ID}} & 
        \multicolumn{3}{c|}{\textbf{WS}} & \multicolumn{3}{c}{\textbf{YT}} \\
        \cmidrule(lr){3-5} \cmidrule(lr){6-8} \cmidrule(lr){9-11} \cmidrule(lr){12-14} \cmidrule(lr){15-17} \cmidrule(lr){18-20}
        & & \textbf{MRR} & \textbf{H@10} & \textbf{H@20} & \textbf{MRR} & \textbf{H@10} & \textbf{H@20} & \textbf{MRR} & \textbf{H@10} & \textbf{H@20} & \textbf{MRR} & \textbf{H@10} & \textbf{H@20} & \textbf{MRR} & \textbf{H@10} & \textbf{H@20} & \textbf{MRR} & \textbf{H@10} & \textbf{H@20} \\
    \midrule
        \multirow{7}*{GNN} 
         & GCN         & 0.306 & 0.587 & 0.744 & 0.505 & 0.833 & 0.929 & 0.280 & 0.582 & 0.768 & 0.638 & 0.824 & 0.911 & 0.511 & 0.866 & 0.957 & 0.483 & 0.728 & 0.835 \\
         & GAT         & 0.383 & 0.679 & 0.816 & 0.508 & 0.848 & 0.935 & 0.310 & 0.589 & 0.760 & 0.696 & 0.848 & 0.930 & 0.452 & 0.794 & 0.931 & 0.400 & 0.714 & 0.829 \\
         & GraphSage   & 0.314 & 0.562 & 0.740 & 0.593 & 0.881 & 0.952 & 0.376 & 0.630 & 0.791 & 0.676 & 0.843 & 0.921 & 0.400 & 0.741 & 0.912 & 0.512 & 0.757 & 0.839 \\
         & GIN         & 0.421 & 0.635 & 0.742 & 0.498 & 0.829 & 0.923 & 0.348 & 0.597 & 0.782 & 0.295 & 0.535 & 0.737 & 0.454 & 0.820 & 0.939 & 0.429 & 0.706 & 0.830 \\
         & GraphGPS    & 0.289 & 0.537 & 0.734 & 0.533 & 0.847 & 0.933 & 0.253 & 0.516 & 0.718 & 0.667 & 0.849 & 0.925 & 0.324 & 0.649 & 0.840 & 0.463 & 0.710 & 0.819 \\
         & DeepGCN     & 0.444 & 0.741 & 0.866 & 0.562 & 0.899 & 0.958 & 0.276 & 0.546 & 0.751 & 0.660 & 0.894 & 0.954 & 0.543 & 0.892 & 0.960 & \underline{0.566} & \textbf{0.823} & \underline{0.886} \\
         & GENI        & 0.350 & 0.612 & 0.742 & 0.421 & 0.758 & 0.883 & 0.242 & 0.468 & 0.652 & 0.640 & 0.863 & 0.917 & 0.420 & 0.734 & 0.912 & 0.387 & 0.681 & 0.811 \\
        \hline
        \multirow{3}*{Finetuning} 
         & SimGRACE    & 0.338 & 0.630 & 0.777 & 0.469 & 0.825 & 0.923 & 0.322 & 0.600 & 0.750 & 0.526 & 0.821 & 0.927 & 0.339 & 0.618 & 0.789 & 0.399 & 0.674 & 0.798 \\
         & GraphMAE    & 0.320 & 0.614 & 0.765 & 0.476 & 0.831 & 0.920 & 0.332 & 0.587 & 0.744 & 0.525 & 0.829 & 0.926 & 0.496 & 0.848 & 0.935 & 0.415 & 0.677 & 0.807 \\
         & AdapterGNN  & \underline{0.532} & \underline{0.801} & \underline{0.880} & \underline{0.689} & \underline{0.930} & \textbf{0.974} & \underline{0.380} & \underline{0.677} & \underline{0.833} & \underline{0.809} & \underline{0.937} & \underline{0.966} & 0.451 & 0.776 & 0.914 & 0.561 & 0.795 & 0.874 \\
        \hline
        \multirow{4}*{Prompt tuning} 
         & GraphPrompt & 0.366 & 0.603 & 0.687 & 0.458 & 0.823 & 0.921 & 0.325 & 0.588 & 0.756 & 0.526 & 0.828 & 0.924 & 0.326 & 0.590 & 0.753 & 0.413 & 0.673 & 0.798 \\
         & GPPT        & 0.351 & 0.638 & 0.799 & 0.474 & 0.830 & 0.927 & 0.323 & 0.592 & 0.751 & 0.547 & 0.839 & 0.930 & 0.581 & 0.714 & 0.847 & 0.412 & 0.678 & 0.818 \\
         & AllinOne    & 0.343 & 0.627 & 0.775 & 0.441 & 0.817 & 0.922 & 0.327 & 0.590 & 0.749 & 0.527 & 0.831 & 0.924 & \underline{0.590} & \textbf{0.914} & \textbf{0.976} & 0.414 & 0.685 & 0.812 \\
         & PreGress    & \textbf{0.534} & \textbf{0.805} & \textbf{0.906} & \textbf{0.748} & \textbf{0.972} & \underline{0.960} & \textbf{0.479} & \textbf{0.756} & \textbf{0.891} & \textbf{0.817} & \textbf{0.948} & \textbf{0.976} & \textbf{0.603} & \underline{0.905} & \underline{0.961} & \textbf{0.576} & \underline{0.808} & \textbf{0.892} \\
    \bottomrule
    \end{tabular}
}
\end{table*}
\subsection{Accuracy Comparison}
\Cref{tab:betweenness_mape} shows that PreGress achieves the most consistent and strongest ranking performance for betweenness centrality prediction across six datasets. It achieves the best NDCG@10/20 on BK, FL, ID, WS, and YT, and remains highly competitive on FB, where it ranks second at NDCG@20 and is only slightly below the best baseline. In contrast, vanilla GNN models perform worse across datasets, suggesting that standard message passing are insufficient to conduct the node ranking task. While fine-tuning and existing prompt-tuning baselines can yield large gains on specific graphs, their improvements are less stable across datasets. Overall, the results indicate that PreGress better aligns pre-training and adaptation with the ranking objective, leading to improved cross-graph generalization and more reliable top-$k$ ordering.

For the \textbf{local subgraph counting} task, we compare our methods with the SOTA prompt tuning methods and traditional method DISC \cite{zhang2020distributed}. Ground truth labels are calculated by SCOPE \cite{li2024fast}.
Table \ref{tab:lscacc} reports NDCG@10 and NDCG@20 for local subgraph counting. Overall, PreGress achieves the most consistent and strongest ranking performance across datasets: it attains the best results on BK, FB, ID, WS, and YT for both cutoffs, and remains highly competitive on FL. Compared with the traditional baseline DISC, PreGress yields clear and consistent improvements in both NDCG@10 and NDCG@20 on all datasets, demonstrating that learned ranking signals are more effective than exact counting heuristics for top-$k$ ordering. Among prompt tuning baselines, performance is strongly dataset-dependent. GraphPrompt excels on FL and AllinOne shows moderate but stable results—whereas PreGress maintains high NDCG at both cutoffs across all graphs, indicating better robustness and cross-graph generalization for local structural ranking tasks.
%

\Cref{tab:link_prediction_hits_mrr} reports the edge-level link prediction results using standard link-prediction metrics. Although PreGress is designed for node-ranking tasks rather than edge prediction, it still achieves the best results on most metrics, including all three metrics on BK, FL, and ID, MRR/H@10 on FB, MRR on WS, and MRR/H@20 on YT. AdapterGNN remains a strong edge-level baseline and obtains the best H@20 on FB, while AllinOne and DeepGCN are competitive on WS and YT Hits metrics, respectively. These results show that the ranking-native pre-training signals used by PreGress do not sacrifice edge-level transferability. More importantly, when considered together with \Cref{tab:betweenness_mape,tab:lscacc}, they indicate that methods with isolated advantages in link prediction do not necessarily transfer to node-ranking tasks, where PreGress is consistently stronger. This supports our motivation that ranking-oriented node structural signals require task-aware pre-training and prompting beyond conventional link-prediction objectives.

%
%

Considering the two node-ranking downstream tasks, PreGress obtains decent results and shows great generalization ability to perform well on most of the datasets.
The results validate that the pre-trained model tends to perform well when the pre-training task and the downstream task are within the same ground-truth label space (answering challenge \circledchar{2} in \Cref{sec:intro}).

\subsection{Real-World Information Access via Recommendation}
\label{sec:recommendation_access}
We next instantiate graph information access as a real recommendation task: given a user/query node, the system ranks candidate item nodes and returns the top-$k$ results.
This evaluation tests whether a pretrained graph backbone can transfer beyond synthetic structural criteria while respecting the train-only interaction boundary.

\begin{table}[t]
\centering
\caption{Real-world query-to-item information access. Results are mean $\pm$ sample standard deviation over three seeds. Latency includes scoring, masking known positives, and top-20 retrieval with cached embeddings. Best and second-best retrieval results within each dataset are \textbf{bolded} and \underline{underlined}.}
\label{tab:recommendation_access}
\footnotesize
\setlength{\tabcolsep}{3.7pt}
\begin{tabular*}{\textwidth}{@{\extracolsep{\fill}}llcccc@{}}
\toprule
Dataset & Method & Recall@20 & NDCG@20 & MRR & Latency (ms) \\
\midrule
\multirow{6}{*}{Yelp2018}
& LightGCN
& $0.0452{\scriptstyle\pm0.0003}$ & $0.0371{\scriptstyle\pm0.0003}$
& $0.0890{\scriptstyle\pm0.0006}$ & $\mathbf{0.1075}{\scriptstyle\pm0.0014}$ \\
& SimGCL
& $\mathbf{0.0638}{\scriptstyle\pm0.0002}$ & $\mathbf{0.0524}{\scriptstyle\pm0.0000}$
& $\mathbf{0.1186}{\scriptstyle\pm0.0002}$ & $0.1105{\scriptstyle\pm0.0033}$ \\
& GraphMAE
& $0.0356{\scriptstyle\pm0.0007}$ & $0.0286{\scriptstyle\pm0.0005}$
& $0.0706{\scriptstyle\pm0.0013}$ & $0.1100{\scriptstyle\pm0.0019}$ \\
& AdapterGNN
& $0.0458{\scriptstyle\pm0.0002}$ & $0.0372{\scriptstyle\pm0.0002}$
& $0.0882{\scriptstyle\pm0.0004}$ & $0.1103{\scriptstyle\pm0.0024}$ \\
& GraphPrompt
& $0.0457{\scriptstyle\pm0.0000}$ & $0.0372{\scriptstyle\pm0.0000}$
& $0.0883{\scriptstyle\pm0.0000}$ & $0.1140{\scriptstyle\pm0.0022}$ \\
& PreGress-CNA
& $\underline{0.0636}{\scriptstyle\pm0.0000}$ & $\underline{0.0523}{\scriptstyle\pm0.0000}$
& $\underline{0.1185}{\scriptstyle\pm0.0000}$ & $\underline{0.1093}{\scriptstyle\pm0.0015}$ \\
\midrule
\multirow{7}{*}{MovieLens}
& Popularity
& $0.5342{\scriptstyle\pm0.0000}$ & $0.2352{\scriptstyle\pm0.0000}$
& $0.1651{\scriptstyle\pm0.0000}$ & $\mathbf{0.2359}{\scriptstyle\pm0.0109}$ \\
& MF
& $0.5266{\scriptstyle\pm0.0089}$ & $0.2154{\scriptstyle\pm0.0024}$
& $0.1418{\scriptstyle\pm0.0008}$ & $0.3852{\scriptstyle\pm0.1111}$ \\
& LightGCN
& $0.5516{\scriptstyle\pm0.0050}$ & $0.2457{\scriptstyle\pm0.0022}$
& $0.1730{\scriptstyle\pm0.0013}$ & $0.3184{\scriptstyle\pm0.0427}$ \\
& PreGress-Struct
& $0.5324{\scriptstyle\pm0.0081}$ & $0.2066{\scriptstyle\pm0.0065}$
& $0.1294{\scriptstyle\pm0.0060}$ & $0.4188{\scriptstyle\pm0.0893}$ \\
& PreGress-CNA
& $0.5480{\scriptstyle\pm0.0017}$ & $0.2463{\scriptstyle\pm0.0004}$
& $0.1748{\scriptstyle\pm0.0000}$ & $\underline{0.2762}{\scriptstyle\pm0.0412}$ \\
& Identity (no pre-training)
& $\underline{0.7300}{\scriptstyle\pm0.0125}$ & $\underline{0.3365}{\scriptstyle\pm0.0065}$
& $\underline{0.2332}{\scriptstyle\pm0.0048}$ & $0.3462{\scriptstyle\pm0.0296}$ \\
& PreGress-ID
& $\mathbf{0.7474}{\scriptstyle\pm0.0017}$ & $\mathbf{0.3501}{\scriptstyle\pm0.0032}$
& $\mathbf{0.2454}{\scriptstyle\pm0.0033}$ & $0.3409{\scriptstyle\pm0.0366}$ \\
\bottomrule
\end{tabular*}
\end{table}

On Yelp2018, the strongest native recommender, SimGCL, obtains 0.0638 Recall@20, 0.0524 NDCG@20, and 0.1186 MRR.
PreGress-CNA reaches 0.0636, 0.0523, and 0.1185, respectively; its relative gaps are below 0.30\% for both Recall@20 and NDCG@20.
This comparable retrieval quality is achieved by adapting 4,417 parameters for \(8.94\) seconds on average, versus 4.46 million parameters and \(349.42\) seconds for SimGCL---a 99.9\% reduction in trainable state and a \(39.1\times\) downstream-training speedup.
GraphMAE underperforms the native recommenders, while AdapterGNN and GraphPrompt remain near LightGCN, indicating that simply importing a generic structural objective or lightweight module is insufficient without a recommendation-aligned transfer signal.

MovieLens exposes an objective-dependent boundary.
PreGress-ID is best, achieving 0.7474 Recall@20, 0.3501 NDCG@20, and 0.2454 MRR.
In contrast, PreGress-CNA obtains 0.5480/0.2463/0.1748 and is statistically indistinguishable from LightGCN on NDCG@20 under a paired three-seed test (\(p=0.633\)).
Thus, collaborative-neighborhood alignment efficiently transfers the teacher signal but does not replace direct identity pre-training when stable user/item identities provide strong transductive supervision.
This negative boundary is consistent across history buckets and prevents us from claiming that a single alignment objective dominates every recommendation regime.

\begin{table}[t]
\centering
\caption{NDCG@20 by user history length. A dash indicates that Yelp2018 contains no evaluated user in the corresponding bucket after one training interaction per user is reserved for validation.}
\label{tab:recommendation_history}
\footnotesize
\setlength{\tabcolsep}{5pt}
\begin{tabular*}{\textwidth}{@{\extracolsep{\fill}}llcccc@{}}
\toprule
Dataset & Method & 1--10 & 11--20 & 21--50 & 51+ \\
\midrule
\multirow{3}{*}{Yelp2018}
& LightGCN & -- & 0.0319 & 0.0364 & 0.0499 \\
& SimGCL & -- & 0.0471 &  0.0513 & 0.0651 \\
& PreGress-CNA & -- & \textbf{0.0472} &\textbf{0.0516} &  \textbf{0.0653}\\
\midrule
\multirow{3}{*}{MovieLens}
& LightGCN & 0.2818 & 0.2456 & 0.2274 & 0.2507 \\
& PreGress-CNA & 0.2849 & 0.2449 & 0.2246 & 0.2542 \\
& PreGress-ID & \textbf{0.4530} & \textbf{0.3935} & \textbf{0.3699} & \textbf{0.2804} \\
\bottomrule
\end{tabular*}
\end{table}

\Cref{tab:recommendation_history} further shows that PreGress-CNA performs best in every observed Yelp2018 history bucket, while PreGress-ID is best in all four MovieLens buckets.
Together, the two datasets support a conditional information-access claim: PreGress can preserve native-recommender quality with much smaller task-specific updates, but the appropriate pre-training signal depends on whether the deployment provides reusable identity evidence or primarily neighborhood structure.

\subsection{Ablation Studies}
We also conduct ablation studies on WS and FL for local subgraph counting and betweenness centrality prediction: replacing the subgraph neural network with GCN (w/o SNN), replacing the prompting GNN with a fully connected layer (w/o prompting GNN), removing prompt tuning (w/o tuning), and removing ego-net extraction (w/o ego net).

\begin{table}[t]
    \centering
    \footnotesize
    \caption{Ablation performance for betweenness centrality prediction and local subgraph counting on WS and FL. Results are NDCG@10/20; the best and second-best results within each task are \textbf{bolded} and \underline{underlined}.}
    \label{tab:abl}
    \setlength{\tabcolsep}{4.5pt}
    \begin{tabular*}{\textwidth}{@{\extracolsep{\fill}}llcccc@{}}
    \toprule
        \multirow{2}{*}{Task} & \multirow{2}{*}{Strategy} &
        \multicolumn{2}{c}{WS} & \multicolumn{2}{c}{FL} \\
        \cmidrule(lr){3-4}\cmidrule(lr){5-6}
        & & NDCG@10 & NDCG@20 & NDCG@10 & NDCG@20 \\
    \midrule
        \multirow{4}{*}{Betweenness}
        & w/o tuning  & \underline{0.1363} & \underline{0.1295} & \underline{0.4101} & \underline{0.4775} \\
        & w/o SNN     & 0.1077 & 0.1051 & 0.2428 & 0.2066 \\
        & w/o ego net & 0.0859 & 0.0718 & 0.2881 & 0.2793 \\
        & PreGress    & \textbf{0.1916} & \textbf{0.1879} & \textbf{0.6667} & \textbf{0.5737} \\
    \midrule
        \multirow{5}{*}{\makecell[l]{Local subgraph\\counting}}
        & w/o tuning        & 0.1487 & 0.1470 & 0.1038 & 0.0914 \\
        & w/o SNN           & \underline{0.1514} & \underline{0.1483} & \underline{0.1724} & \underline{0.1646} \\
        & w/o prompting GNN & 0.1133 & 0.1041 & 0.1242 & 0.1093 \\
        & w/o ego net       & 0.1323 & 0.1307 & 0.1502 & 0.1500 \\
        & PreGress          & \textbf{0.1848} & \textbf{0.1752} & \textbf{0.1863} & \textbf{0.1820} \\
    \bottomrule
    \end{tabular*}
\end{table}

\Cref{tab:abl} reports the results in terms of NDCG@10 and NDCG@20. We have three main findings (addressing Challenge \circledchar{3} in \Cref{sec:intro}):
\begin{enumerate*}
    \item \textbf{Prompt tuning is essential}: w/o tuning consistently yields much lower NDCG@10/20, indicating that task-aware adaptation is crucial for ranking.
    \item \textbf{Subgraph-aware pre-training matters}: replacing SNN with GCN (w/o SNN) degrades performance, showing the importance of subgraph structural signals.
    \item \textbf{Ego-net extraction helps}: removing ego-net extraction (w/o ego net) hurts NDCG@10/20, especially for betweenness prediction, suggesting better node-centric structural focus.
\end{enumerate*}
\subsection{Pre-training Task Analysis}
\Cref{fig:pretrainingana} shows that \textbf{PreGress} achieves the best NDCG@10 on both local subgraph counting and centrality prediction, validating the effectiveness of our multi-task pre-training for downstream node ranking.
Among the variants, node-level ranking pretext tasks such as PageRank, eigenvector centrality, k-core, and degree centrality prediction are consistently stronger than link prediction, indicating a mismatch between edge-level meta-knowledge and node-level ranking targets.
\begin{wrapfigure}{r}{0.62\textwidth}
    \centering
    \includegraphics[width=\linewidth]{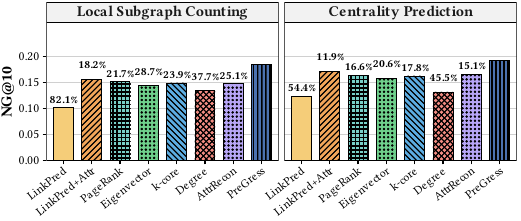}
    \caption{Performance of different pre-training tasks on WS dataset. Bars report NG@10, and labels above bars report performance difference.}
    \label{fig:pretrainingana}
\end{wrapfigure}
Moreover, degree centrality prediction combined with attribute reconstruction performs best, suggesting that structural ranking supervision and feature-level reconstruction provide complementary transferable knowledge for downstream node ranking.
Overall, these results suggest that node-centric pre-training tasks together with attribute reconstruction are more aligned with downstream ranking quality, answering Challenge \circledchar{1} in \Cref{sec:intro}.

\subsection{Rank-Distribution Alignment Analysis}
To further examine why degree centrality is a reasonable ranking-oriented pretext signal, we compare each structural signal as a zero-shot node ranker with downstream labels from betweenness centrality and local subgraph counting.
For local subgraph counting, we report the macro average over patterns and summarize alignment using rank correlation, proxy NDCG, top-\(k\) overlap, and JS distance.

\begin{figure}[htbp]
    \centering
    \includegraphics[width=\linewidth]{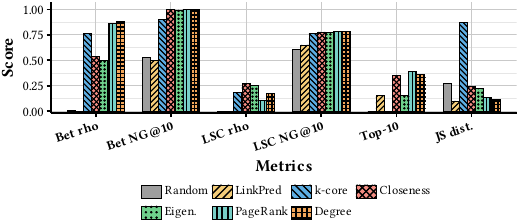}
    \caption{Zero-shot rank-distribution alignment on selected BK and YT datasets. Higher is better except for JS distance.}
    \label{fig:rank_alignment}
\end{figure}

\Cref{fig:rank_alignment} shows that node-level centrality signals are much more aligned with downstream ranking labels than random or edge-level link prediction.
Among the additional ranking-level pretext tasks, PageRank and eigenvector centrality provide strong betweenness alignment, while k-core is competitive for local subgraph counting.
Degree centrality remains among the best signals across both downstream targets and has the lowest computational cost, which explains why we use degree prediction together with attribute reconstruction as the final multi-task pre-training design.

\subsection{Over-smoothing Suppression Analysis}
\Cref{tab:oversmooth_ndcg} reports the betweenness centrality ranking performance in terms of NDCG@10 and NDCG@20 on WS as GNN depth increases. 
\begin{table}[t]
    \footnotesize
    \centering
    \caption{Betweenness centrality prediction performance (NDCG@10 / NDCG@20) with different GNN layers $l$ on WS dataset.}
    \label{tab:oversmooth_ndcg}
    \setlength{\tabcolsep}{5pt}
    \begin{tabular}{lcccccc}
        \toprule
        \multirow{2}{*}{\textbf{Method}} & \multicolumn{2}{c}{\textbf{$l$=10}} & \multicolumn{2}{c}{\textbf{$l$=20}} & \multicolumn{2}{c}{\textbf{$l$=30}} \\
        \cmidrule(lr){2-3} \cmidrule(lr){4-5} \cmidrule(lr){6-7}
         & \textbf{NG@10} & \textbf{NG@20} & \textbf{NG@10} & \textbf{NG@20} & \textbf{NG@10} & \textbf{NG@20} \\
        \midrule
        GCN         & 0.1298 & 0.1052 & 0.1044 & 0.0890 & 0.0846 & 0.0712 \\
        DeepGCN     & 0.1650 & 0.1441 & 0.1285 & 0.1007 & \underline{0.1087} & \underline{0.1045} \\
        \midrule
        SimGRACE    & \underline{0.1744} & 0.1543 & 0.1054 & 0.0859 & 0.0801 & 0.0719 \\
        GraphMAE    &0.1451 & 0.1272 & \underline{0.1484} & \underline{0.1113} & 0.0947 & 0.0755 \\
        GPPT & 0.1368& 0.1245 & 0.1163 & 0.1015 & 0.1029 & 0.0915 \\
        AdapterGNN  & 0.1588 & \underline{0.1545} & 0.1267 & 0.0089 & 0.0990 & 0.0851 \\
        \midrule
        GraphPrompt & 0.1403 & 0.1264 & 0.1181 & 0.0976 & 0.0807 & 0.0727 \\
        AllinOne & 0.1391 & 0.1385 & 0.1051 & 0.0972 & 0.0825 & 0.0651 \\
        PreGress    & \textbf{0.1763} &\textbf{0.1681} & \textbf{0.1665} & \textbf{0.1621} & \textbf{0.1812} & \textbf{0.1735} \\
        \bottomrule
    \end{tabular}
\end{table}
As the number of layers grows from 10 to 30, most baselines suffer from clear performance degradation, even with over-smoothing mitigation techniques such as residual connections (DeepGCN), self-supervised pre-training (SimGRACE, GraphMAE), or prompt-based adaptation (GraphPrompt, GPPT, AllinOne).
In contrast, PreGress maintains stable and even improved NDCG@10/20 at larger depths, achieving the best performance at $l{=}30$. This indicates that PreGress effectively suppresses over-smoothing and preserves discriminative node representations under deep architectures, leading to more reliable top-$k$ ranking quality.
\subsection{Few-shot Accuracy Evaluation}
\begin{table}[t]
    \footnotesize
    \centering
    \caption{Few-shot accuracy evaluation of betweenness centrality prediction task on WS dataset (NG@10 / NG@20).}
    \label{tab:fewshot_ndcg}
    \resizebox{\linewidth}{!}{
    \begin{tabular}{c|c|cc|cc|cc|cc}
    \hline
        \multirow{2}{*}{\textbf{Training Scheme}} & \multirow{2}{*}{\textbf{Method}} & 
        \multicolumn{2}{c|}{\textbf{10-shot}} & \multicolumn{2}{c|}{\textbf{20-shot}} & 
        \multicolumn{2}{c|}{\textbf{50-shot}} & \multicolumn{2}{c}{\textbf{full}} \\
        \cline{3-10}
        & & \textbf{NG@10} & \textbf{NG@20} & \textbf{NG@10} & \textbf{NG@20} & \textbf{NG@10} & \textbf{NG@20} & \textbf{NG@10} & \textbf{NG@20} \\
    \hline
    \hline
        \multirow{2}*{Finetuning} & SimGRACE & 0.1004 & 0.0967 & 0.1149 & 0.1058 & 0.1335 & 0.1241 & 0.1371 & 0.1341 \\
        & AdapterGNN & 0.1045 & \underline{0.1029} & 0.1168 & 0.1070 & 0.1285 & 0.1200 & 0.1313 & 0.1226 \\
    \hline
        \multirow{4}*{Prompt tuning} 
         & GraphPrompt & \underline{0.1145} & 0.1019 & \underline{0.1219} & 0.1043 & \underline{0.1505} & \underline{0.1476} & \underline{0.1662} & \underline{0.1872} \\
         & GPPT        & 0.1020 & 0.0981 & 0.1209 & \underline{0.1136} & 0.1308 & 0.1301 & 0.1344 & 0.1369 \\
         & AllinOne    & 0.1091 & 0.0714 & 0.1202 & 0.1134 & 0.1284 & 0.1277 & 0.1315 & 0.1314 \\
         & PreGress    & \textbf{0.1169} & \textbf{0.1154} & \textbf{0.1301} & \textbf{0.1247} &\textbf{ 0.1612} & \textbf{0.1547} & \textbf{0.1916} & \textbf{0.1879} \\
    \hline
    \end{tabular}
    }
\end{table}

\Cref{tab:fewshot_ndcg} reports the few-shot performance of betweenness centrality prediction on WS in terms of NDCG@10 and NDCG@20. Under all few-shot settings (10/20/50-shot), PreGress consistently outperforms fine-tuning and existing prompt-tuning baselines, with particularly clear margins in the extremely low-data regime. As the number of labeled nodes increases, PreGress exhibits steady improvements and approaches its full-data performance already at 50-shot. These results suggest that the pre-trained subgraph representations capture transferable ranking signals, while prompt-based adaptation effectively aligns limited supervision with the downstream node ranking objective.

\subsection{Efficiency Evaluation}
We evaluate query efficiency by comparing the average query time of PreGress with baseline methods across multiple datasets for both centrality prediction and local subgraph counting, and further analyze the efficiency--performance trade-off of different tuning strategies under the same pre-trained backbone.
\begin{figure}
    \centering
    \includegraphics[width=\linewidth]{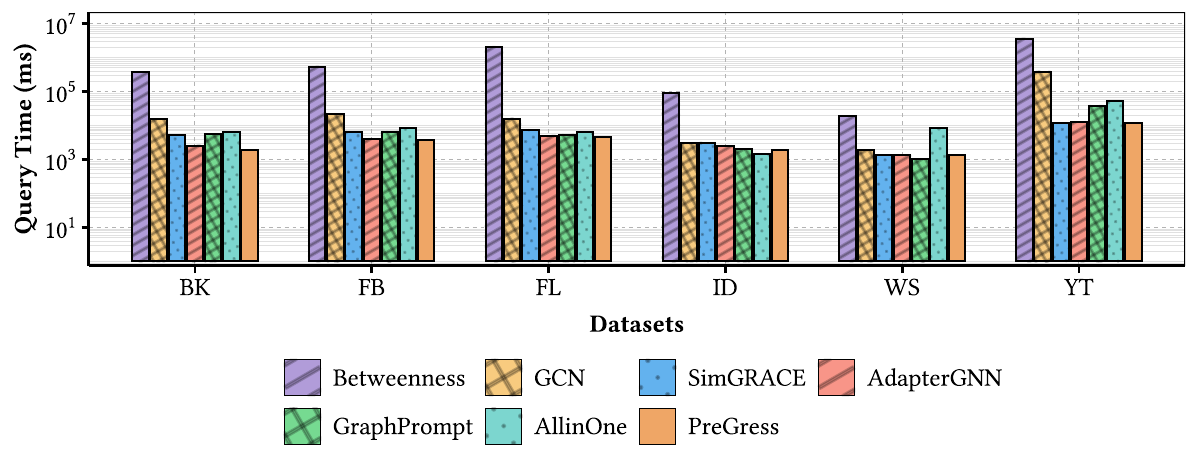}
    \caption{Average query time (ms) for betweenness centrality prediction.}
    \label{fig:effbet}
\end{figure}

\Cref{fig:effbet} reports the average query time for betweenness centrality prediction. Across all six datasets, PreGress consistently achieves low query latency, significantly outperforming the exact betweenness algorithm by several orders of magnitude. Compared with other learning-based methods, including fine-tuning and prompt-tuning approaches, PreGress remains among the fastest or achieves comparable efficiency, while delivering substantially better ranking quality. This demonstrates that PreGress provides an effective trade-off between accuracy and efficiency for large-scale node ranking.
\begin{figure}[h]
    \centering
    \includegraphics[width=\linewidth]{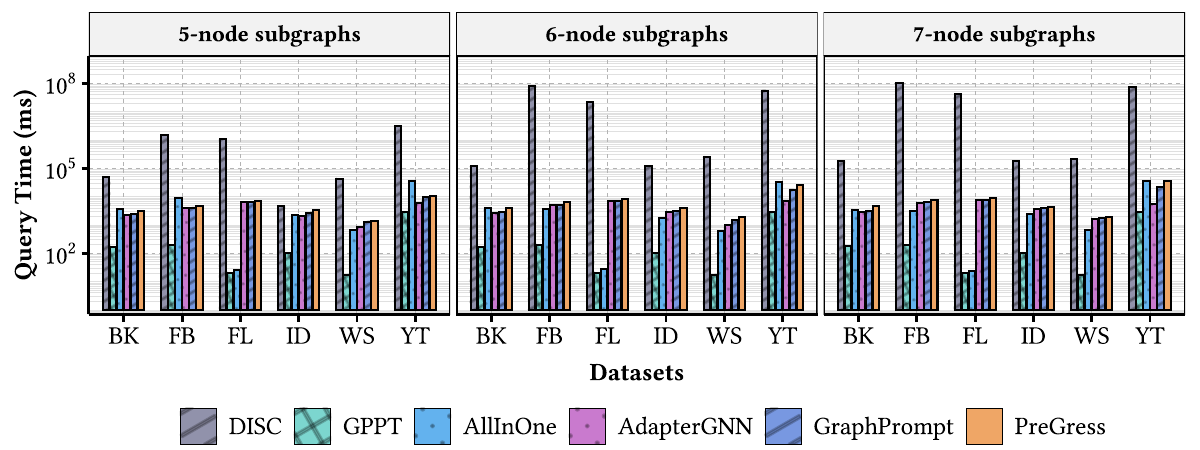}
    \caption{Average query time (ms) for local subgraph counting.}
    \label{fig:efflsc}
\end{figure}

\Cref{fig:efflsc} shows the efficiency results for local subgraph counting with different pattern sizes. PreGress consistently outperforms traditional counting methods such as DISC, achieving up to five orders of magnitude speedup. Among learning-based baselines, GPPT is faster on several datasets because it uses only lightweight token routing, but this comes with weaker ranking quality: its macro-average NDCG@10/20 is 0.271/0.267, compared with 0.347/0.322 for PreGress in \Cref{tab:lscacc}. Thus, PreGress is not intended to be the fastest neural baseline in every case; instead, it provides a better accuracy--efficiency trade-off by retaining orders-of-magnitude speedup over exact counting while delivering substantially stronger ranking performance.

\begin{table}[htbp]
\centering
\caption{Comparison of tuning strategies under the same pre-trained backbone on the local subgraph counting task.}
\label{tab:tuning_compare}
\resizebox{\linewidth}{!}{
\begin{tabular}{l|c|c|c|c|c}
\toprule
Method & Backbone Frozen & Trainable Params & Params Ratio & NDCG@10 & Peak Mem (GB) \\
\midrule
Full Fine-tuning & \ding{55} & 10241 & 100\% & 0.1508 & 3.48 \\
AdapterGNN & \ding{51} & 12289 & 36.23\% & 0.1313 & 0.59 \\
PreGress (Prompt) & \ding{51} & \textbf{1185} & \textbf{11.57\%} & \textbf{0.1916} & \textbf{0.20} \\
\bottomrule
\end{tabular}
}
\end{table}
Beyond query-time efficiency, \Cref{tab:tuning_compare} compares full fine-tuning, adapter-based tuning, and prompt-based tuning under the same backbone on local subgraph counting. Full fine-tuning updates all parameters but performs worse, suggesting negative transfer when directly adapting the backbone to downstream ranking. Adapter-based tuning improves efficiency by freezing the backbone, yet still incurs non-trivial trainable parameters and memory overhead.  In contrast, PreGress freezes the pretrained backbone and updates only the task-specific prompt generator and prediction head. This constrained adaptation preserves the pretrained ranking-oriented structural representation and reduces negative transfer, which is particularly important for local subgraph counting where labels are pattern-conditioned and highly skewed. As a result, prompt tuning achieves higher NDCG@10 than full fine-tuning, despite updating fewer parameters than the full model.
\subsection{Scalability Test} 
\begin{wrapfigure}{r}{0.56\textwidth}
    \centering
    \includegraphics[width=\linewidth]{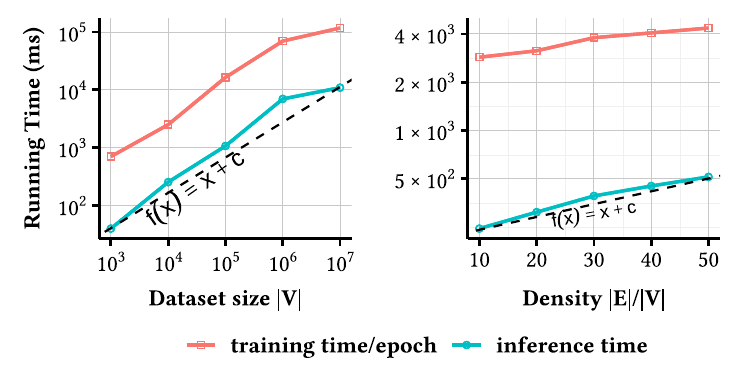}
    \caption{Scalability test on synthetic datasets.}
    \label{fig:scal}
\end{wrapfigure}
To evaluate the scalability of our approach, we conduct the experiment from two perspectives: dataset size and density. 
In the experiment relevant to data size, we use five synthetic datasets whose size range is \(10^3\) to \(10^7\) and whose density is set to 5. 
For the experiment relevant to density, five synthetic datasets with 10000 nodes and density from \(10\) to \(50\) are tested. The task applied in the scalability test is centrality centrality prediction. 
The findings presented in \Cref{fig:scal} illustrate that PreGress exhibits strong scalability and efficiency when handling large and dense graphs.

\section{Related Work}
\label{sec:related_work}
\textbf{Graph Node Ranking.} Graph node ranking is a fundamental problem in information retrieval \cite{guan2009personalized,narang2021ranking,narang2021ranking,geng2007feature,yu2015high,agarwal2006ranking}, 
which aims to induce an ordered list of nodes according to their relative importance or relevance.
Early studies \cite{guan2009personalized,yu2015high} such as GRoMO \cite{guan2009personalized} rely on handcrafted graph-theoretic measures to define ranking criteria, such as PageRank, betweenness centrality, and other structural importance scores.
More recent studies \cite{gupte2017role,gilpin2013guided} like RID$\epsilon$Rs \cite{gupte2017role} move beyond pairwise connectivity and incorporate higher-order structural evidence such as recurring local configurations and role-oriented patterns to define the centrality for ranking.
However, these approaches are often tightly coupled to a particular centrality and offer limited flexibility for handling diverse ranking criteria.
As graph retrieval scenarios increasingly involve different centrality measures \cite{du2023seq}, there is a growing need for scalable methods that learn a reusable scoring function for node ranking.

\textbf{Learning-based Approaches for Node Ranking.} 
To address the scalability limitations, learning-based methods~\cite{ergashev2023learning,he2022gnnrank,sankar2021graph} have been proposed to approximate node ranking scores or directly learn relevance scoring functions over graphs.
Despite their efficiency advantages, existing learning-based rankers require expensive ground-truth labels for each ranking criterion. 
Moreover, many methods~\cite{he2019fast,hu2008collaborative,qiu2018network} formulate node ranking as a regression problem and optimize pointwise losses, which does not necessarily align with ranking quality or preserve fine-grained ordering among nodes.
In addition, supervised GNN-based rankers \cite{he2022gnnrank} frequently exhibit limited generalization across graphs or ranking intents, and typically require retraining or extensive fine-tuning when the ranking criterion changes.
These challenges highlight the need for transferable ranking-oriented representations and efficient adaptation mechanisms for node ranking tasks.

\section{Conclusion}
\label{sec:conclusion}
In this paper, we introduce a novel prompt tuning-based pre-training approach designed to solve node ranking tasks.
By combining ranking-aligned pre-training objective with lightweight prompting, {\Method} offers a practical recipe for label-efficient and scalable deployment—pre-train once on easily accessible graph signals and quickly adapt to new ranking measures without full retraining.
Overall, {\Method} provides a simple yet transferable node scoring backbone for diverse ranking criteria under limited supervision.
Extensive experiments demonstrate the effectiveness and efficiency of our approach in node ranking tasks, highlighting its potential utility and applicability in further ranking scenarios.
We believe {\Method} takes a step toward unifying node ranking problems under a shared pre-training-and-prompting framework.

\bibliographystyle{ACM-Reference-Format}
\bibliography{main}


\end{document}